\def\maketitle{
\@author@finish
\title@column\titleblock@produce
\suppressfloats[t]}
\newcommand{\bra}[1]{\langle #1|}
\newcommand{\ket}[1]{|#1\rangle}
\newcommand{\braket}[2]{\langle #1|#2\rangle}
\DeclareMathOperator{\Tr}{Tr}
\newtheorem{dfn}{Definition}
\newtheorem{thm}{Theorem}
\newcommand{\mspc}{\hphantom{-}}
\begin{document}
\title{Exposure of subtle multipartite quantum nonlocality}
\author{M. M. Taddei}
\email{marciotaddei@gmail.com}
\affiliation{Federal University of Rio de Janeiro, Caixa Postal 68528, Rio de Janeiro, RJ 21941-972, Brazil}
\affiliation{ICFO - Institut de Ciencies Fot\`oniques, The Barcelona Institute of Science and Technology, 08860, Castelldefels, Barcelona, Spain}

\author{T. L. Silva}
\affiliation{Federal University of Rio de Janeiro, Caixa Postal 68528, Rio de Janeiro, RJ 21941-972, Brazil}

\author{R. V. Nery}
\affiliation{Federal University of Rio de Janeiro, Caixa Postal 68528, Rio de Janeiro, RJ 21941-972, Brazil}
\affiliation{International Institute of Physics, Federal University of Rio Grande do Norte, 59070-405, Natal, Brazil}

\author{G. H. Aguilar}
\affiliation{Federal University of Rio de Janeiro, Caixa Postal 68528, Rio de Janeiro, RJ 21941-972, Brazil}

\author{S. P. Walborn}
\affiliation{Federal University of Rio de Janeiro, Caixa Postal 68528, Rio de Janeiro, RJ 21941-972, Brazil}
\affiliation{Departamento de F\'{\i}sica, Universidad de Concepci\'on, 160-C Concepci\'on, Chile}
\affiliation{ANID – Millennium Science Initiative Program – Millennium Institute for Research in Optics, Universidad de Concepci\'on, 160-C Concepci\'on, Chile}

\author{L. Aolita}
\affiliation{Federal University of Rio de Janeiro, Caixa Postal 68528, Rio de Janeiro, RJ 21941-972, Brazil}
\affiliation{Quantum Research Centre, Technology Innovation Institute, Abu Dhabi, UAE} 

\date{\today}
\begin{abstract}
The celebrated Einstein-Podolsky-Rosen quantum steering has a complex structure in the multipartite scenario. We show that a naively defined criterion for multipartite steering allows, like in Bell nonlocality, for a contradictory effect whereby local operations could create steering seemingly from scratch. Nevertheless, neither in steering nor in Bell nonlocality has this effect been experimentally confirmed.
Operational consistency is reestablished by presenting a suitable redefinition: there is a subtle form of steering already present at the start, and it is only exposed --- as opposed to created --- by the local operations. We devise protocols that, remarkably, are able to reveal, in seemingly unsteerable systems, not only steering, but also Bell nonlocality. Moreover, we find concrete cases where entanglement certification does not coincide with steering. A causal analysis reveals the crux of the issue to lie in hidden signaling.
Finally, we implement one of the protocols with three photonic qubits deterministically, providing the experimental demonstration of both exposure and super-exposure of quantum nonlocality.
\end{abstract}
%%%%%%%%%%%%%needed for NPJ, not final arXiv: 
%%%%%%%%%%%%%\keywords{quantum information, device-independent quantum correlations, multipartite steering, multipartite Bell nonlocality}

\maketitle

%%%%%%%%%%%%%%%%%%%%%%%%%%%%%%%%%%%%%%%%%%%%%%%%%%%%%%%%%%%%%%%%%%%%%%%%%%%%%%%%%%%%%%%%%%%%%%%
%%%%%%%%%%%%%%%%%%%%%%%%%%%%%%%%%%%%%%%%%%%%%%%%%%%%%%%%%%%%%%%%%%%%%%%%%%%%%%%%%%%%%%%%%%%%%%%

\section*{Introduction}\label{sec:intro}

Three forms of quantum correlation occur in nature---entanglement, Bell nonlocality and steering.  
The distinction between them can be viewed, from an operational perspective, as given by the level of trust and control that one has on the systems involved. 
Entanglement, for instance, is naturally formulated in the so-called device-dependent (DD) scenario \cite{Horodecki2009}. There, one assumes that the system can be completely characterized by the measurement apparatus, 
at least in principle. Bell nonlocality, in contrast, takes place in the device-independent (DI) description \cite{Brunner2014}. There, measurement devices are treated as untrusted black boxes whose actual measurement process is uncharacterized or ignored, relying only on classical measurement settings (inputs) and results (outputs). 
Quantum steering, on the other hand, is a hybrid type of correlation -- intermediate between entanglement and Bell nonlocality -- that arises in semi-DI settings \cite{Reid2009,Cavalcanti2017,Uola2019}. The latter involve both DD and DI parties, and an example is shown in Fig.\ \ref{fig:wiring}\textbf{a} for the tripartite case of two untrusted devices and one trusted one. For all three types of correlation, the multipartite scenario is considerably richer than the bipartite one. 

Whereas entanglement is a resource for DD applications in quantum information, Bell nonlocality is the key resource for DI applications such as DI quantum key distribution \cite{Barrett2005,Acin2006,Acin2006a,Acin2007}, DI-certified randomness \cite{Colbeck2009,Colbeck2010,Pironio2010,Acin2016}, DI-verifiable blind quantum computation \cite{Gheorghiu2015,Hajdusek2015} and DI conference-key agreement \cite{Ribeiro2018,Holz2020,Murta2020}, which are typically much more experimentally demanding than the corresponding DD protocols. Steering is known to be the crucial resource for key technological applications in the semi-DI scenario, which are generally less technically difficult than their DI counterparts, while requiring less assumptions than the corresponding DD protocols. These include semi-DI entanglement certification \cite{Wiseman2007,Jones2007,Cavalcanti2017,Uola2019}, quantum key distribution \cite{Branciard2012,He2013}, certified-randomness generation \cite{Skrzypczyk2018}, quantum secret sharing \cite{Kogias2017,Xiang2017}, as well as other useful protocols in multipartite quantum networks \cite{Huang2019}.

\begin{figure*}[t]
	\centering
	\includegraphics[width=1.8\columnwidth]{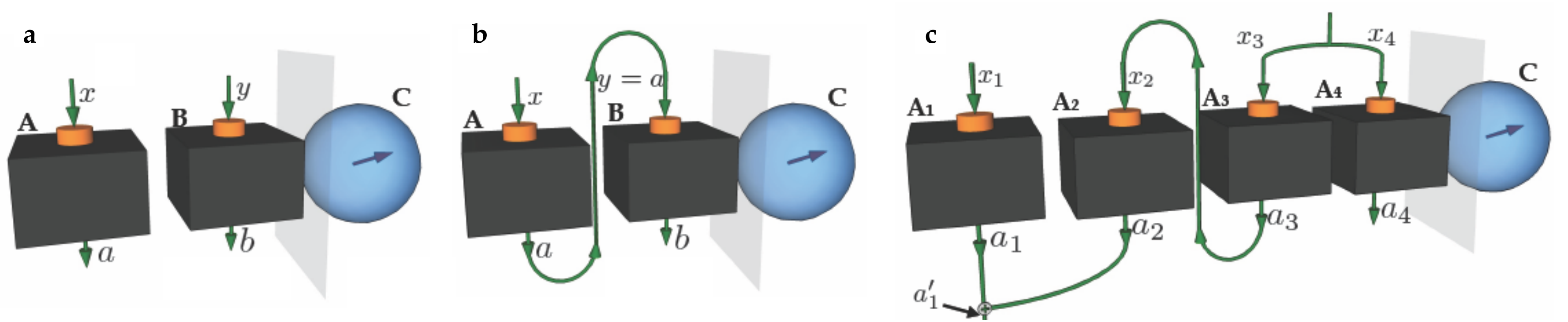}
	\caption{Several hybrid (trusted-untrusted) multipartite scenarios. In the device-dependent (DD) case, measurement devices are well characterized (trusted), so that a specific quantum state (represented by Bloch spheres) can be attributed to the system. In the device-independent (DI) case, in contrast, the devices are uncharacterized (untrusted), so that systems are represented by black boxes. Semi-DI scenarios contain both trusted and untrusted components. There, the joint system is mathematically described by a hybrid object -- intermediate between a  state and a Bell behavior -- called \emph{assemblage}, and the type of nonlocality they can feature is called \emph{steering}. In all three panels the shaded plane illustrates the bipartition of the trusted subsystem versus the untrusted ones. \textbf{a} An assemblage in the 2DI+1DD scenario: Alice  and Bob rely on a black-box description, whereas Charlie's system is trusted. All three subsystems are space-like separated. \textbf{b} Alice and Bob are no longer space-like separated: she communicates her output to him and he uses this to choose his input. This is an example of a bilocal \emph{wiring} (local with respect to the bipartition $AB\vert C$). Such operations cannot create any correlations across the bipartition, but they can \emph{expose} a subtle form of multipartite quantum nonlocality that otherwise does not violate any Bell or steering inequality across the bipartition (see text). \textbf{c} A 4DI+1DD assemblage is mapped onto a 2DI+1DD one by a bilocal wiring ($x_2=a_3$, $x_3=x_4$, and $a'_1=a_1+a_2\mod2$). Such wirings can implement non-trivial resource-theoretic transformations, but not enough to enable a multi-black-box universal \emph{steering bit}, i.e. an $N$-partite assemblage from which all bipartite ones, e.g., can be reached (see  Supplementary Notes \ref{app:proof_nobits}).\\}
	\label{fig:wiring}
\end{figure*}

Interestingly, an operational inconsistency has arisen in the fully DI multipartite scenario \cite{Gallego2012,Bancal2013}. It is rooted in the existence of an operation local to the $AB$ partition that can create a Bell nonlocality across $AB|C$. The issue, however, is best understood with the framework of resource theories.

Resource theories constitute formal treatments of a physical property as a resource, providing a complete toolbox for its quantification, classification, and operational manipulation (see, e.g., \cite{Brandao2015,Brandao2015a,Coecke2016}). Applied and fundamental interest has motivated their formulation for entanglement \cite{Horodecki2009} and Bell nonlocality \cite{Gallego2012,deVicente2014,Gallego2017,Wolfe2019}, as well as for other relevant quantum properties \cite{Winter2016,Chitambar2016,Grudka2013,Amaral2018,Taddei2019,Wolfe2019}. Most important for our discussion is the resource theory of steering \cite{Gallego2015,Kaur2017}. The cornerstone of any resource theory is the set of its \emph{free operations}. These are unable to create the resource: they transform every resourceless state into a resourceless state. As a concrete example, free operations for quantum steering include, on the untrusted side, pre and post-processings of classical variables of the black boxes and, on the trusted side, local quantum operations and classical communication to the untrusted parties. It can be shown \cite{Gallego2015} that these operations cannot create quantum steering out of unsteerable systems.
 
A fully DI description is cast in terms of a \emph{Bell behavior}, given by a conditional probability distribution of the outputs given the inputs. Bell locality implies that there exists a local-hidden-variable (LHV) model, in which correlations are explained by a (hypothetical) classical common cause (the hidden variable) within the common past light-cone of the measurement events \cite{Bell1964}. Any Bell-inequality violation implies incompatibility with LHV models, i.e. Bell nonlocality. Bell-local behaviors are, naturally, the resourceless states of the resource theory of Bell nonlocality. We shall use the term \emph{bilocal} to refer to being local with respect to the $AB|C$ bipartition.  
It stands to reason that operations within a given partition are free. However, a ``wiring'' between $A$ and $B$ (e.g.\ linking the output of one black box to the input of another as in Fig.~1 \textbf{b}) is confined to $AB$ but can map tripartite Bell behaviors that are local in the $AB|C$ partition (i.e. bilocal) into bipartite Bell behaviors that violate a Bell inequality across $AB|C$. 
The problem, however, lied in the  definition of Bell nonlocality in multipartite scenarios used previously \cite{Svetlichny1987}.

According to the traditional definition \cite{Svetlichny1987}, Bell nonlocality across a system bipartition is incompatible with any LHV model with respect to it. This includes so-called ``fine-tuned'' models \cite{Wood2015} with hidden signaling. These are LHV models where, for each value of the hidden variable, the subsystems on each side of the bipartition communicate, but for which the statistical mixture over all values of the hidden variable renders the observable correlations non-signaling. 
The problem is that the bilocal wiring can conflict with the hidden signaling in such models, giving rise to a causal loop. 
For instance, assume that for a particular tripartite system, there is only one LHV decomposition, which uses hidden signaling from Bob to Alice. To physically implement the wiring in Fig.\ \ref{fig:wiring}\textbf{b}, which is an example of a free operation allowed within the AB partition, Bob must be in the causal future of Alice. This, in turn, is inconsistent with the direction of the hidden signaling. 
This explains why apparently bilocal behaviors can lead to Bell violations after a bilocal wiring.  A redefinition of multipartite Bell nonlocality was then proposed \cite{Gallego2012,Bancal2013}. This considers the correlations from conflicting bilocal models as already nonlocal across the bipartition, so that the wiring simply exposes an already-existing subtle form of Bell nonlocality. We refer to the latter form and effect as subtle Bell nonlocality and Bell-nonlocality exposure, respectively.

The redefinition fixed the inconsistency, but also opened several intriguing questions. First, no experimental observation of Bell-nonlocality exposure has been reported. Second, even though steering theory is relatively mature \cite{Cavalcanti2011,He2013,Armstrong2015,Taddei2016,Li2015}, little is known about \emph{steering exposure}. 
Steering features in the semi-DI description, where systems are described in terms of \emph{assemblages}, given by quantum states describing the DD subsystems, weighted by the conditional probabilities describing the DI parties. Operational consistency relative to steering exposure was considered, in particular, 
in a  definition of multipartite steering \cite{He2013}, but based on models where each party is probabilistically either trusted or untrusted. On the other hand,  a definition based on multipartite entanglement certification in semi-DI setups with fixed trusted-versus-untrusted divisions was proposed in Ref. \cite{Cavalcanti2015a}. There, bilocal hidden-variable  models with an explicit quantum realization are considered, which automatically rules out potentially-conflicting fined-tuned models. Nevertheless, this has the side-effect of over-restricting the set of unsteerable assemblages, thus potentially over-estimating steering. Third, exposure as a resource-theoretic transformation is yet unexplored territory. For instance, is it possible to obtain every bipartite assemblage via exposure from some multipartite one? What about Bell behaviors? Moreover, is there a single $N$-partite assemblage from which all bipartite ones are obtained via exposure?

These are the questions we answer. To begin with, we show that, remarkably, exposure of quantum nonlocality is a universal effect, in the sense that every bipartite Bell behavior (assemblage) can be the result of Bell-nonlocality (steering) exposure starting from some tripartite one. This highlights the power of exposure as a resource-theoretic transformation. However, we also delimit such power: we prove a no-go theorem for multi-black-box universal steering bits: there exists no single $N$-partite assemblage (with $N-1$ untrusted and 1 trusted devices) from which all bipartite ones can be obtained through free operations of steering. Interestingly, in the universal steering exposure protocol, the starting behavior is not guaranteed to admit a physical realization, i.e. it may be supra-quantum \cite{Sainz2015,Sainz2018a,Sainz2019}. Therefore, we also derive an alternative protocol that -- albeit not universal -- is manifestly within quantum theory. 
Moreover, we show that the output assemblage of such protocol is not only steerable but also Bell nonlocal (in the sense of producing a nonlocal behavior upon measurements by Charlie). This is notable as Bell nonlocality is a stronger form of quantum correlation than steering. We refer to this effect as \emph{super-exposure of Bell nonlocality}. 
In turn, we provide a redefinition of (both multipartite and genuinely multipartite) steering to re-establish operational consistency. 
Finally, we experimentally demonstrate exposure as well as super-exposure. This is done using polarization and path degrees of freedom of two entangled photons generated by spontaneous parametric down conversion, in a deterministic protocol. 

\section*{Preliminaries}
%%%%%%%%%%%%%%%%%%%%%%%%%%%%
\subsection*{Steering and the semi-DI setting} \label{sec:prelim}
Most of our discussion will be based on the semi-DI setting of Fig.\ \ref{fig:wiring}\textbf{a}. We will not resort to quantum models of the black boxes; our definitions are based on the semi-DI setting alone, as befits its treatment as a resource for quantum tasks. Such systems are fully described by a Bell behavior $\boldsymbol{P}^{(AB)}:=\{P_{a,b|x,y}\}_{a,b,x,y}$, with $P_{a,b|x,y}$ the conditional probability of outputs $a,b$ given inputs $x,y$, for Alice and Bob, and an ensemble of conditional quantum states $\varrho_{a,b|x,y}$ for Charlie. These can be encapsulated in a hybrid object known as the assemblage $\boldsymbol\sigma:=\{\sigma_{a,b|x,y}\}_{a,b,x,y}$, of sub-normalized conditional states $\sigma_{a,b|x,y}:=P_{a,b|x,y}\,\varrho_{a,b|x,y}$. 

Unlike in Bell nonlocality or entanglement, semi-DI systems have a natural bipartition: the one separating the trusted devices from the untrusted ones, $AB|C$. This is the bipartition with respect to which we define steering throughout, unless otherwise explicitly stated.
We assume that $\boldsymbol\sigma$ satisfies the no-signaling (NS) principle, by virtue of which measurement-outcome correlations alone do not allow for communication. This implies that the statistics observed by Charlie should be independent of the input(s) of the remaining user(s). Mathematically, this condition reads 
\begin{equation}
\sum_{a,b}\sigma_{a,b|x,y} = \varrho^{(C)}, \ \quad\forall\ x, y,
\label{eq:NSAB-C}
\end{equation}
where $\varrho^{(C)}$ is the reduced state on $C$.  Furthermore, we also assume that Alice and Bob are NS, i.e.\ choosing their inputs does not provide them any communication,%,
\begin{align}
\sum_a\sigma_{a,b|x,y}& \ \mbox{ independent of } x,\ \quad\forall\ b, y, \label{eq:NSAC-B}\\
\sum_b\sigma_{a,b|x,y}& \ \mbox{ independent of } y,\ \quad\forall\ a, x. \label{eq:NSA-BC}
\end{align}
 
The definition of steering in the $AB|C$ partition hinges on the impossibility of decomposing an assemblage $\boldsymbol\sigma$ as
\begin{equation}
\sigma_{a,b|x,y} = \sum_\lambda \ P_{\lambda}\ \ P_{a,b|x,y,\lambda} \,\varrho_{\lambda} \ .
\label{eq:LHS}
\end{equation}
Here, $P_{\lambda}$ is the probability of the hidden variable $\Lambda$ taking the value $\lambda$, each $\boldsymbol{P}_\lambda^{(AB)}:=\{P_{a,b|x,y,\lambda}\}_{a,b,x,y}$ is a $\lambda$-dependent behavior, and $\varrho_{\lambda}$ is the $\lambda$-th hidden state for $C$ (locally correlated with $AB$ only via $\Lambda$). However, different approaches have diverging positions on the set to which the distribution $\boldsymbol{P}_\lambda^{(AB)}$ may belong. Possibilities range \cite{Uola2019} from the full set of valid bipartite distributions to the most restricted set of factorizable ones (i.e. $P_{a,b|x,y,\lambda}=P_{a|x,\lambda}P_{b|y,\lambda}$ $\forall a,b,x,y$). In \cite{Cavalcanti2015a}, steering is treated as equivalent to entanglement certification, hence each distribution $\boldsymbol{P}_\lambda^{(AB)}$ is required to be quantum-mechanically realizable.
Our operational approach is defined in terms of assemblages only and aims to use them as resources, not for inferences on the quantum models that can produce them. It is thus best to ignore restrictions and consider, as a starting point, a general probability distribution. As such, $\boldsymbol\sigma$ is unsteerable if it admits a local hidden-state (LHS) model, defined by Eq.\eqref{eq:LHS} with general $\boldsymbol{P}_\lambda^{(AB)}$; otherwise  $\boldsymbol\sigma$ is steerable.

Importantly, a non-signaling $\boldsymbol\sigma$ does not imply non-signaling $\boldsymbol{P}_\lambda^{(AB)}$ for each $\lambda$. (Imposition of the latter would be an additional requirement, one that is used in \cite{Cavalcanti2017} for yet another definition of steering in the literature.) In fact, LHS models can exploit hidden signaling between Alice and Bob as long as communication at the observable level (i.e. upon averaging $\Lambda$ out) is impossible. This effect is known as fine-tuning \cite{Wood2015} and will turn out to be critical.

%%%%%%%%%%%%%%%%%%%%%%%%%%%%%%%%%%%%%%%%%%%%%%%%%%%%%%%%%%%%%%%%%%%%%%%%%%%%%%%%%%%%%%%%%%%%%%%
%%%%%%%%%%%%%%%%%%%%%%%%%%%%%%%%%%%%%%%%%%%%%%%%%%%%%%%%%%%%%%%%%%%%%%%%%%%%%%%%%%%%%%%%%%%%%%%
\section*{Results}
\subsection*{Steering exposure and Bell-nonlocality super-exposure}%\label{sec:exposure}
We begin by an exposure protocol for steering and Bell nonlocality that is universal in the sense of being capable of producing any bipartite assemblage (behavior) whatsoever from an appropriate tripartite assemblage  (behavior) originally admitting an LHS (LHV) model. As in Ref. \cite{Gallego2012}, we exploit bilocal wirings as that of Fig.\,\ref{fig:wiring}\textbf{b}, which makes Bob's input $y$ equal to Alice's output $a$. This requires that Bob's measurement is in the causal future of Alice's. Indeed, after the wiring, systems $A$ and $B$ now behave as a single black box with input $x$ and output $b$. In other words, exposure is a form of conversion from tripartite correlations into bipartite ones. Here, we restrict to the case of binary inputs and outputs ($x$, $y$, $a$, $b$ $\in\{0,1\}$) for simplicity, where we prove the following surprising result.

\begin{thm}[Universal exposure of quantum nonlocality]\label{th:universal}
Any bipartite assemblage $\boldsymbol\sigma^{(\text{target})}$ or Bell behavior $\boldsymbol{P}^{(\text{target})}$ can be obtained via the wiring $y=a$ on the tripartite assemblage $\boldsymbol\sigma^{(\text{initial})}$ or behavior $\boldsymbol{P}^{(\text{initial})}$, respectively, of elements
\begin{equation}
\sigma_{a,b|x,y}^{(\text{initial})} := \frac12 \sigma_{b|x\oplus a\oplus y}^{(\text{target})} \ 
\label{eq:genactiv}
\end{equation}
\textit{or}
\begin{equation}
P^{(\text{initial})}(a,b,c|x,y,z) = \frac12 P^{(\text{target})}(b,c|x\oplus a\oplus y,z) \ ,
\label{eq:genblackboxactiv}
\end{equation}
where $\oplus$ stands for addition modulo 2. Moreover, $\boldsymbol\sigma^{(\text{initial})}$ and $\boldsymbol{P}^{(\text{initial})}$ admit respectively an LHS and an LHV models across the $AB|C$ bipartition, for all $\boldsymbol\sigma^{(\text{target})}$ and $\boldsymbol{P}^{(\text{target})}$.
\end{thm}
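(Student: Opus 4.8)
The plan is to verify the claimed identities directly and then, for the one substantive point, to exhibit an explicit local hidden-state (respectively hidden-variable) decomposition.

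First I would check that the wiring does what is claimed. Setting $y=a$ and viewing the $AB$ pair as a single box with input $x$ and output $b$, the wired assemblage has elements $\sum_a \sigma^{(\text{initial})}_{a,b|x,a}=\sum_a \tfrac12\,\sigma^{(\text{target})}_{b|x\oplus a\oplus a}=\sum_a \tfrac12\,\sigma^{(\text{target})}_{b|x}=\sigma^{(\text{target})}_{b|x}$, using $a\oplus a=0$ and that $a$ runs over two values; the Bell-behavior case is identical with $\varrho$ replaced by Charlie's box. Along the way one checks that $\boldsymbol\sigma^{(\text{initial})}$ is a legitimate no-signaling assemblage: positivity is immediate, while $\sum_{a,b}\sigma^{(\text{initial})}_{a,b|x,y}=\varrho^{(C)}$ and the constraints \eqref{eq:NSAB-C}--\eqref{eq:NSA-BC} follow from $\sum_b\sigma^{(\text{target})}_{b|x'}=\varrho^{(C)}$ for every $x'$ together with the fact that $a\mapsto x\oplus a\oplus y$ is a bijection of $\{0,1\}$ (the same fact that makes the whole construction run).

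The core of the proof is the LHS model. I would take a four-valued hidden variable $\lambda=(r,b')\in\{0,1\}^2$, with weights $P_\lambda=\tfrac12\,\Tr[\sigma^{(\text{target})}_{b'|r}]$ --- a valid distribution because $\sum_{r,b'}\Tr[\sigma^{(\text{target})}_{b'|r}]=\sum_r\Tr[\varrho^{(C)}]=2$ --- hidden states $\varrho_\lambda=\sigma^{(\text{target})}_{b'|r}/\Tr[\sigma^{(\text{target})}_{b'|r}]$ (an arbitrary fixed state when the weight vanishes), and a deterministic $\lambda$-dependent $AB$-behavior $P_{a,b|x,y,\lambda}=\delta_{b,b'}\,\delta_{a,\,x\oplus y\oplus r}$, which for each $x,y,\lambda$ is a legitimate conditional distribution on $(a,b)$. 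Inserting these into Eq.~\eqref{eq:LHS} collapses the sum over $\lambda$ to the single term with $b'=b$ and $r=a\oplus x\oplus y$, giving $\tfrac12\,\sigma^{(\text{target})}_{b|x\oplus a\oplus y}=\sigma^{(\text{initial})}_{a,b|x,y}$. For the Bell version the same $\lambda$ and the same $AB$-behavior work, with Charlie's box taken as $P_{c|z,\lambda}=P^{(\text{target})}(b',c|r,z)/P^{(\text{target})}(b'|r)$, and the decomposition takes the bilocal form $\sum_\lambda P_\lambda\,P_{a,b|x,y,\lambda}\,P_{c|z,\lambda}$.

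I do not expect a serious obstacle: the construction is essentially the whole content, the rest being a few lines of bookkeeping. The points that need a little care are the degenerate terms with $P_\lambda=0$ (harmless, as they drop out of every sum) and, in the Bell case, the fact that $P_{c|z,\lambda}$ is a well-defined behavior for Charlie only because $\boldsymbol{P}^{(\text{target})}$ is no-signaling across $B|C$ --- which is in any case forced, since a bilocal $\boldsymbol{P}^{(\text{initial})}$ must be no-signaling across $AB|C$. Finally, I would remark that the $AB$-behaviors $P_{a,b|x,y,\lambda}=\delta_{b,b'}\,\delta_{a,\,x\oplus y\oplus r}$ signal from Bob to Alice (Alice's marginal output depends on $y$), so the model is fine-tuned in exactly the sense discussed above; it is the clash between this hidden signaling and the causal order $A\to B$ imposed by the wiring $y=a$ that turns the seemingly unsteerable $\boldsymbol\sigma^{(\text{initial})}$ into a steerable object, which is the mechanism of exposure.
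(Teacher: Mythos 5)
Your proposal is correct and follows essentially the same route as the paper's own proof: the two-bit hidden variable $\lambda=(r,b')$ with weights $\tfrac12\Tr[\sigma^{(\text{target})}_{b'|r}]$, hidden states $\sigma^{(\text{target})}_{b'|r}/\Tr[\sigma^{(\text{target})}_{b'|r}]$, and the deterministic response $\delta_{b,b'}\,\delta_{a,\,x\oplus y\oplus r}$ (equivalent to the paper's $\delta_{\lambda_0,b}\,\delta_{\lambda_1,\,x\oplus a\oplus y}$), plus the same conditional $P_{c|z,\lambda}$ for the Bell case. Your extra checks (normalization, degenerate $P_\lambda=0$ terms, no-signaling of the initial object) are sound and only add bookkeeping the paper leaves implicit.
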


That the initial correlations are mapped to the desired target is self-evident from Eqs.\ (\ref{eq:genactiv},\ref{eq:genblackboxactiv}). What is certainly not evident is that the initial correlations are bilocal. This is proven in  Supplementary Notes \ref{app:proof_exposure_universal} by construction of explicit LHS and LHV models. When the target assemblage (behavior) is steerable (Bell nonlocal), exposure of steering (Bell nonlocality) is achieved. Furthermore, apart from steerable, assemblages can also be Bell nonlocal in the sense of giving rise to nonlocal behaviors under local measurements \cite{Taddei2016}. Hence, when $\boldsymbol\sigma^{(\text{target})}$ is Bell nonlocal, a seemingly unsteerable system --- i.e.\ one that admits an LHS decomposition --- is mapped onto a Bell nonlocal one, which is outstanding in view of the fact that unsteerable assemblages form a strict subset of Bell-local ones. 

The protocol highlights the capabilities of bilocal wirings as resource-theoretic transformations. Remarkably, such wirings compose a strict subset of well-known classes of free operations of quantum nonlocality (across $AB|C$): local operations with classical communication (LOCCs) \cite{Horodecki2009} for entanglement, one-way (1W) LOCCs from the trusted to the untrusted parts \cite{Gallego2015} for steering, and local operations with shared randomness \cite{Gallego2012,deVicente2014,Gallego2017} for Bell nonlocality. However, there are also limitations to the capabilities of these wirings. In particular, in Supplementary Notes \ref{app:proof_nobits} we prove a no-go theorem for universal steering bits in the $N$DI+1DD scenario [exemplified in Fig.\ \ref{fig:wiring}\textbf{c} for $N=4$]. That is, we show there that there is no $N$-partite assemblage, for all $N$, from which all bipartite ones can be obtained via arbitrary 1W-LOCCs.

Although the protocol above is universal, it is unclear whether it can actually be physically implemented in general. This is due to the fact that the tripartite initial correlations may be supra-quantum, i.e.\ well-defined non-signaling correlations that can however not be obtained from local measurements on any quantum state \cite{Sainz2015,Sainz2018a,Sainz2019,Popescu1994}. Physical protocols for Bell-nonlocality exposure were devised in Refs.\ \cite{Gallego2012,Bancal2013}, but no such protocols have been reported for steering. Hence, we
next  derive an alternative example for both steering exposure and Bell-nonlocality super-exposure that is manifestly within quantum theory. This also exploits the bilocal wirings of Fig.~\ref{fig:wiring}\textbf{b}, but starting from a different initial assemblage.
We describe the latter directly in terms of its quantum realization.
Consider a tripartite Greenberg-Horne-Zeilinger (GHZ) state $(|000\rangle+|111\rangle)/\sqrt2$, with $\ket0$ and $\ket1$ the eigenvectors of the third Pauli matrix $Z$. Bob makes von Neumann measurements on his share of the state for both his inputs, for $y=0$ in the $Z+X$ basis and for $y=1$ in the $Z-X$ basis, with $X$ the first Pauli matrix.  Alice, however, makes either a trivial measurement, given by the positive operator-valued measure $\{\mathbb 1/2,\mathbb 1/2\}$, for $x=0$, or a von Neumann $X$-basis measurement, for $x=1$. For the resulting initial
 assemblage, $\boldsymbol\sigma^{(\text{GHZ})}$, the following holds (see Supplementary Notes \ref{app:proof_exposure_universal_quantum} for more details).

\begin{thm}[Physically-realizable exposure and super-exposure]\label{th:quantum_protocol}
The quantum assemblage $\boldsymbol\sigma^{(\text{GHZ})}$, of elements
\begin{equation}
\sigma^{(\text{GHZ})}_{a,b|x,y}
 = \frac18 \left\{\mathbb1 + \frac{(-1)^b}{\sqrt2}\left[Z + x(-1)^{a+y}X\right]\right\} \ 
\label{eq:quantumorig}
\end{equation}
admits an LHS model and, under the wiring $y=a$, is mapped to the assemblage of elements
\begin{equation}
\sigma_{b|x}
 = \frac14 \left[\mathbb1 + \frac{(-1)^b}{\sqrt2}\left(Z + x X\right)\right] \ ,
\label{eq:quantumsteerable}
\end{equation}
which is both steerable and Bell-nonlocal.
\end{thm}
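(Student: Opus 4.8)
The plan is to establish three things: (i) the wiring $y=a$ sends the assemblage~\eqref{eq:quantumorig} to~\eqref{eq:quantumsteerable}; (ii) $\boldsymbol\sigma^{(\text{GHZ})}$ admits an LHS model across $AB|C$ in the sense of Eq.~\eqref{eq:LHS}; and (iii) the wired assemblage~\eqref{eq:quantumsteerable} is both steerable and Bell-nonlocal. Claim~(i) is immediate: setting $y=a$ makes $(-1)^{a+y}=1$, so $\sigma^{(\text{GHZ})}_{a,b|x,a}$ ceases to depend on $a$, and $\sum_{a}\sigma^{(\text{GHZ})}_{a,b|x,a}=2\,\sigma^{(\text{GHZ})}_{0,b|x,0}=\tfrac14\{\mathbb{1}+\tfrac{(-1)^b}{\sqrt2}(Z+xX)\}$, which is~\eqref{eq:quantumsteerable}. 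The substantive parts are~(ii) and the steerability half of~(iii); Bell-nonlocality then follows from a short CHSH estimate.

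For~(ii) I would write down an explicit LHS model and verify it term by term against Eq.~\eqref{eq:LHS}. Let $\lambda\in\{1,2,3,4\}$ be uniform, with hidden states $\varrho_1,\dots,\varrho_4$ the pure qubits whose Bloch vectors are $(1,0,1)/\sqrt2$, $(1,0,-1)/\sqrt2$, $(-1,0,1)/\sqrt2$, $(-1,0,-1)/\sqrt2$ --- that is, the eigenstates of Bob's own measurement operators $(Z\pm X)/\sqrt2$. The algebra one needs is $\tfrac14\sum_\lambda\varrho_\lambda=\tfrac12\mathbb{1}=\varrho^{(C)}$; that each $\tfrac18\{\mathbb{1}+\tfrac{(-1)^b}{\sqrt2}(Z\pm X)\}$ equals $\tfrac14\varrho_\lambda$ for a single $\lambda$; and that $\tfrac18\{\mathbb{1}\pm\tfrac1{\sqrt2}Z\}$ equals $\tfrac18(\varrho_1+\varrho_3)$ or $\tfrac18(\varrho_2+\varrho_4)$. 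Now let Bob output $b=\beta(\lambda)$ deterministically, ignoring his input, with $\beta(\lambda)=0$ for $\lambda\in\{1,3\}$ and $\beta(\lambda)=1$ for $\lambda\in\{2,4\}$; and let Alice output uniformly at random when $x=0$ and output $a=g(\lambda)\oplus y$ when $x=1$, with $g(\lambda)=0$ for $\lambda\in\{1,4\}$ and $g(\lambda)=1$ for $\lambda\in\{2,3\}$. Substituting into~\eqref{eq:LHS} and collecting terms reproduces all sixteen elements of~\eqref{eq:quantumorig}. The point to stress is that in this model Alice's output depends on Bob's input $y$ when $x=1$ --- hidden signaling of the fine-tuned kind, which disappears at the observable level ($P(a|1,y)=\tfrac12$) --- and this is exactly what conflicts with the wiring's causal requirement that Bob act after Alice; it is why the wired assemblage can escape the bilocal set although $\boldsymbol\sigma^{(\text{GHZ})}$ does not. (The model must use the operational, signaling-permitting notion of LHS of Eq.~\eqref{eq:LHS}; indeed $\boldsymbol\sigma^{(\text{GHZ})}$ has no non-signaling LHS decomposition.) Guessing this model --- in particular its signaling structure --- will be the main obstacle of the whole proof.

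For the steerability of~\eqref{eq:quantumsteerable} I would argue directly from the rank-one structure of its conditional states. For $x=1$ the two elements are $\sigma_{0|1}=\tfrac12\varrho_1$ and $\sigma_{1|1}=\tfrac12\varrho_4$, multiples of the mutually orthogonal pure states with Bloch vectors $\pm(1,0,1)/\sqrt2$. In any LHS decomposition $\sigma_{b|x}=\sum_\lambda P_\lambda\,D_\lambda(b|x)\,\varrho_\lambda$ (deterministic responses without loss of generality), a sum of positive semidefinite operators that equals a rank-one operator forces each summand to be proportional to it; hence $D_\lambda(0|1)=1\Rightarrow\varrho_\lambda=\varrho_1$ and $D_\lambda(1|1)=1\Rightarrow\varrho_\lambda=\varrho_4$, so every hidden state is $\varrho_1$ or $\varrho_4$. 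Then $\sigma_{0|0}=\sum_{\lambda:\,D_\lambda(0|0)=1}P_\lambda\varrho_\lambda$ would be a non-negative combination of $\varrho_1$ and $\varrho_4$, whose Bloch vector necessarily has equal $X$- and $Z$-components --- contradicting $\sigma_{0|0}=\tfrac14(\mathbb{1}+\tfrac1{\sqrt2}Z)$, which has a vanishing $X$-component and a nonzero $Z$-component. So no LHS model exists, i.e.\ \eqref{eq:quantumsteerable} is steerable.

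For Bell-nonlocality, after the wiring let Charlie measure a dichotomic observable $C_z$ on input $z$; the joint statistics $P(b,c|x,z)=\Tr[\tfrac12(\mathbb{1}+(-1)^cC_z)\,\sigma_{b|x}]$ form a legitimate non-signaling bipartite behavior with correlators $E(x,z)=\Tr[C_z(\sigma_{0|x}-\sigma_{1|x})]$. From~\eqref{eq:quantumsteerable}, $\sigma_{0|0}-\sigma_{1|0}=\tfrac1{2\sqrt2}Z$ and $\sigma_{0|1}-\sigma_{1|1}=\tfrac1{2\sqrt2}(Z+X)$; writing the associated Bloch vectors $\vec u_0=\tfrac1{\sqrt2}(0,0,1)$ and $\vec u_1=\tfrac1{\sqrt2}(1,0,1)$, the maximal CHSH value over Charlie's qubit measurements is $|\vec u_0+\vec u_1|+|\vec u_0-\vec u_1|=\tfrac{1+\sqrt5}{\sqrt2}\approx 2.29>2$, attained for instance at $C_0=(X+2Z)/\sqrt5$ and $C_1=-X$. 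Hence~\eqref{eq:quantumsteerable} is Bell-nonlocal in the sense of Ref.~\cite{Taddei2016}: a free operation has turned a seemingly unsteerable assemblage into a Bell-nonlocal one, which is the super-exposure effect.
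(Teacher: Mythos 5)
Your proposal is correct, and all of its ingredients check out against the paper. The wiring computation and the LHS model are in fact identical to the paper's: your four hidden states with Bloch vectors $(\pm 1,0,\pm 1)/\sqrt2$, Bob's input-independent deterministic output fixed by the sign of the $Z$-component, and Alice's response (uniform for $x=0$, $a=g(\lambda)\oplus y$ for $x=1$) are exactly the decomposition $P_\lambda=\tfrac14$, $\varrho_\lambda=\tfrac{\mathbb1}{2}+\tfrac{(-1)^{\lambda_0}}{2\sqrt2}[Z+(-1)^{\lambda_1}X]$, $P_{a,b|x,y;\lambda}=\delta_{\lambda_0,b}\,\tfrac{1+x(-1)^{a+y+\lambda_1}}{2}$ that the paper reports (found there via SDP rather than guessed), including the fine-tuned $B\to A$ hidden signaling you highlight. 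Your CHSH analysis also reproduces the paper's: the correlator vectors $\vec u_0,\vec u_1$ give $|\vec u_0+\vec u_1|+|\vec u_0-\vec u_1|=(1+\sqrt5)/\sqrt2\approx2.29$ with Charlie measuring along $2Z+X$ and $X$, matching the value obtained there from the criterion of Ref.~\cite{Taddei2016}. Where you genuinely diverge is the steerability step: the paper exhibits a numerically optimized steering witness $W=\{w_{a|x}\}$ with $\sum_{a,x}\Tr[w_{a|x}\sigma_{a|x}]=1.0721>1$, whereas you give a short analytic impossibility argument — the rank-one $x=1$ members force every hidden state onto the ray of $\varrho_1$ or $\varrho_4$, whose nonnegative combinations cannot reproduce the purely-$Z$ Bloch vector of $\sigma_{0|0}$. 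Your argument is self-contained and exact (and arguably more illuminating), while the paper's witness buys quantitative robustness and is what makes the claim experimentally testable on noisy data; both are valid proofs of steerability.
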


These results require a redefinition of steering in the multipartite scenario, since an assemblage can admit an LHS decomposition and still be steerable. We describe this redefinition, analogous to the one in \cite{Gallego2012}, before moving on to the experimental realization.

%%%%%%%%%%%%%%%%%%%%%%%%%%%%%%%%%%%%%%%%%%%%%%%%%%%%%%%%%%%%%
\subsection*{Consistently defining steering}\label{sec:def_steering}

The existence of subtle steering implies a stark inconsistency between the naive definition of steering from LHS decomposability, Eq.~\eqref{eq:LHS}, and the notion of locality. Since the free operations that cause exposure are classical and strictly local (fully contained in the $AB$ partition), it is reasonable that they are unable to create not only steering but also any form of correlations (even classical ones) across  $AB|C$. The alternative left is to redefine bipartite steering in multipartite scenarios such that, e.g., the assemblages in Eqs.\ \eqref{eq:genactiv} and \eqref{eq:quantumorig} are already steerable. Formally, we need to exclude a subclass of LHS decompositions from the set of unsteerable assemblages.

To identify that subclass, let us apply the wiring $y=a$ to a general $\boldsymbol\sigma$ fulfilling Eq.\ \eqref{eq:LHS}. This gives $\boldsymbol\sigma^{(\text{wired})}$, of elements
\begin{align}
\sigma_{b|x}^{(\text{wired})}&:=\sum_a\sigma_{a,b|x,a}= \sum_\lambda \ \ P_{\lambda} \Big(\sum_a P_{a,b|x,a,\lambda}\Big) \varrho_{\lambda}.
\label{eq:wiredLHS} 
\end{align}
This is a valid LHS decomposition as long as the term within parentheses yields a valid (normalized) conditional probability distribution (of $B$ given $X$ and $\Lambda$). This is the case if every $\boldsymbol{P}_\lambda^{(AB)}$ in Eq.\ \eqref{eq:LHS} is non-signaling. In that case, by summing over $b$ and applying the NS condition, one gets
\begin{equation}
 \sum_{a,b}P_{a,b|x,a,\lambda} = \sum_{a}P_{a|x,a,\lambda} \stackrel{\rm NS}{=} \sum_a P_{a|x,\lambda} = 1 \ ,
\label{eq:norm}
\end{equation}
which renders $\boldsymbol\sigma^{(\text{wired})}$ indeed unsteerable. However, this reasoning can in general not be applied if any $\boldsymbol{P}_\lambda^{(AB)}$ is signaling from Bob to Alice, i.e. if Alice's marginal distribution for $a$ depends on $y$ (apart from  $x$ and $\lambda$).
\begin{figure}%
\includegraphics[width=.8\columnwidth]{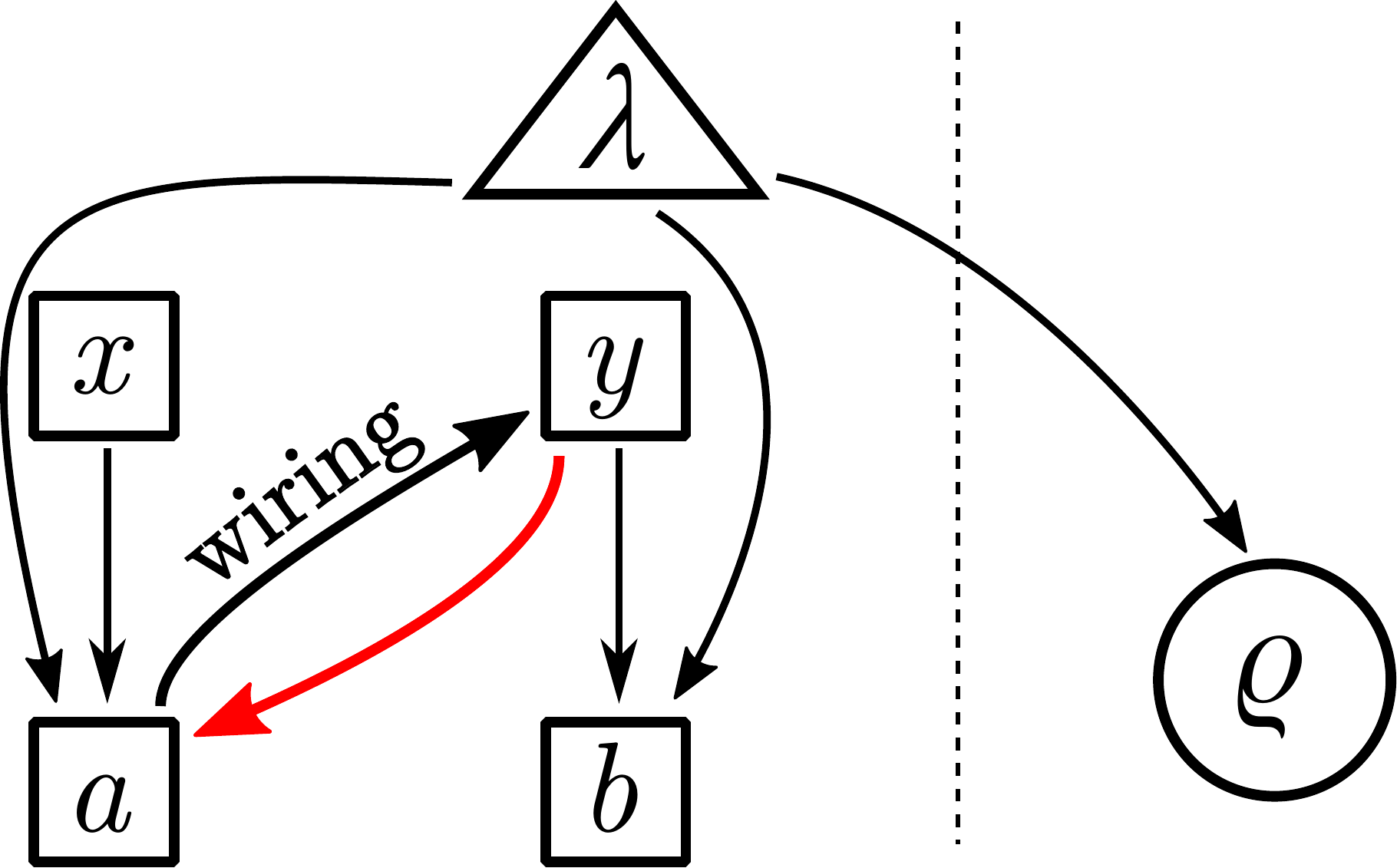}%
\caption{ Steering exposure as a causal loop. In the causal network underlying LHS models, given by Eq.\ \eqref{eq:LHS}, the hidden variable $\lambda$ directly influences Charlie's quantum state $\varrho$ as well as the Alice and Bob's outputs $a$ and $b$, which are in turn also influenced by the inputs $x$ and $y$, respectively. Even though the observed assemblage (after averaging  $\lambda$ out) is non-signaling, the model can still exploit hidden signaling (i.e. at the level of $\lambda$). For instance, for each $\lambda$, Alice's output may depend (red arrow) on Bob's input in a different fine-tuned way such that the dependence vanishes at the observable level.
The wiring of Fig.\ \ref{fig:wiring}\textbf{b} forces $y=a$, closing a causal loop that will in general conflict with the latter dependence for some $\lambda$.
As a consequence, the final assemblage resulting from the wiring may not admit a valid LHS decomposition, exposing steering. Hence, the exposure can in a sense be thought of as an operational benchmark for hidden signaling in the LHS model describing the initial assemblage.
}%
\label{fig:causal}%
\end{figure}
Therefore, we see that the inconsistency is rooted at hidden signaling.
In fact, at the level of the underlying causal model, the phenomenon of exposure can be understood as a causal loop between such signaling and the applied wiring (see Fig.\ \ref{fig:causal}).

To restore consistency, hidden signaling must be restricted. An obvious possibility would be to allow only for non-signaling $\boldsymbol{P}_\lambda^{(AB)}$'s in Eq.\ \eqref{eq:LHS}. Interestingly, however, this turns out to be over-restrictive. Following the redefinition of multipartite Bell nonlocality \cite{Gallego2012,Bancal2013}, we propose the following for bipartite steering in multipartite scenarios.

\begin{dfn}[Redefinition of steering]
\label{def:TOLHS_NSLHS}
\textit{An assemblage $\boldsymbol\sigma$ is unsteerable if it admits \emph{time-ordered LHS} (TO-LHS) decompositions both from $A$ to $B$ and from $B$ to $A$ simultaneously, i.e. if}
\begin{align}
\sigma_{a,b|x,y} 
= & \sum_\lambda P_{\lambda}\ \ P^{(A\to B)}_{a,b|x,y,\lambda}\ \varrho_{\lambda} \label{eq:signAB}\\
= & \sum_\lambda P'_{\lambda} \ \ P^{(B\to A)}_{a,b|x,y,\lambda}\ \varrho'_{\lambda} \label{eq:signBA} \ ,
\end{align}
\textit{where each $\boldsymbol{P}^{(A\to B)}_\lambda$ is non-signaling from Bob to Alice and each $\boldsymbol{P}^{(B\to A)}_\lambda$ from Alice to Bob. Otherwise  $\boldsymbol\sigma$ is steerable.}
\label{eq:TOdefinition}
\end{dfn}

\begin{figure}%
\includegraphics[width=.6\columnwidth]{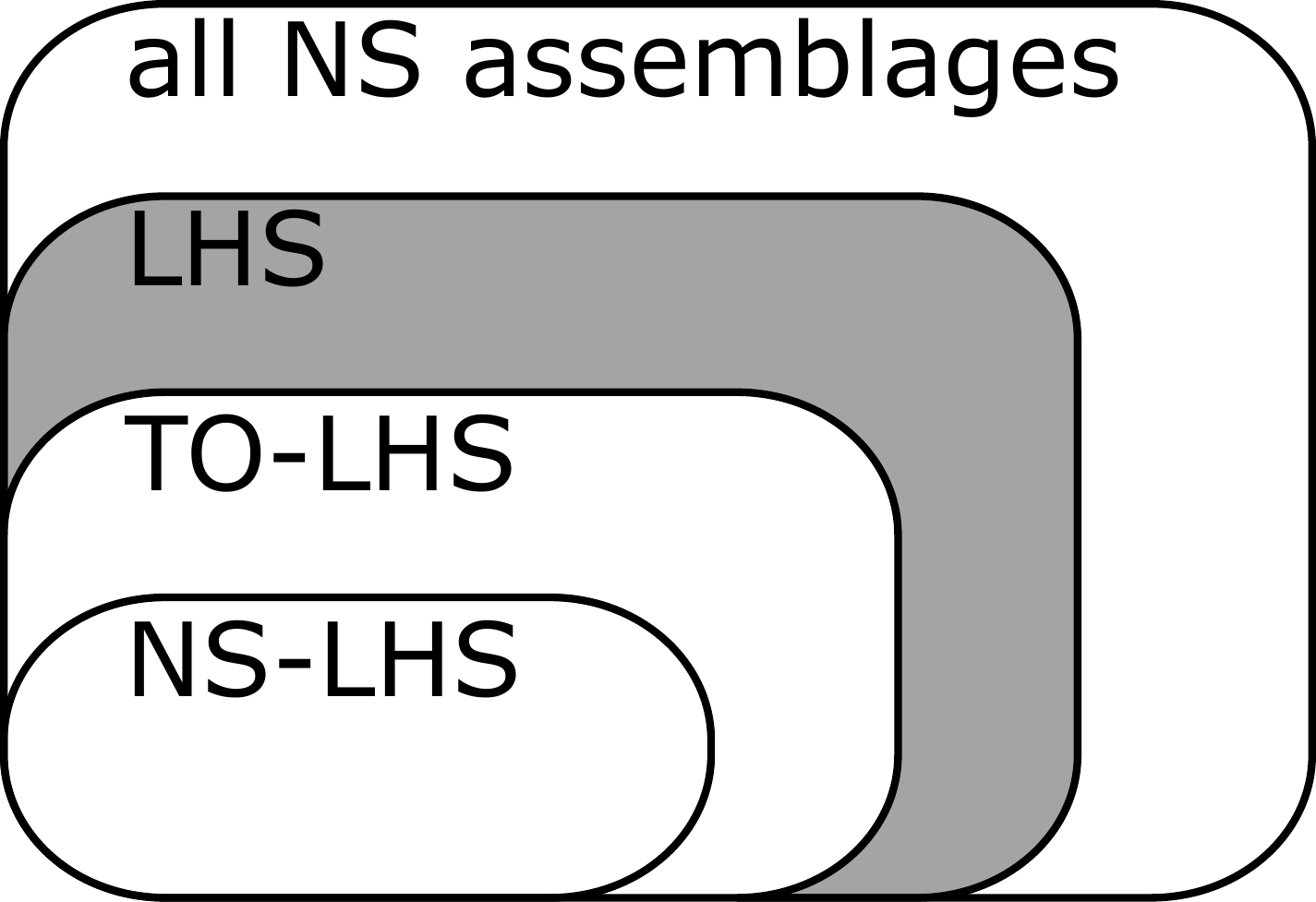}%
\caption{ 
Pictorial representation of inner structure of the set of all non-signaling assemblages in the tripartite scenario. 
Inclusion is strict for all depicted subsets: the set \textsf{LHS} of generic local-hidden-state (LHS) assemblages, the set \textsf{TO-LHS} of time-ordered LHS ones, the set \textsf{NS-LHS} of non-signaling LHS ones, and the set \textsf{Q-LHS} of quantum-LHS ones (see Supplementary Notes \ref{app:strict_subset} for details).
The shaded region represents the set of assemblages with subtle steering. Bilocal wirings can expose such steering by mapping that region to the set of (bipartite) steerable assemblages.
}%
\label{fig:setsLHS}%
\end{figure} 

The validity of both time orderings simultaneously prevents conflicting causal loops. More precisely, if a wiring from Alice to Bob is applied on $\boldsymbol\sigma$, one uses decomposition \eqref{eq:signAB} to argue with the $\boldsymbol{P}^{(A\to B)}_\lambda$'s [as in Eq.\ \eqref{eq:norm}] that the wired assemblage is unsteerable. Analogously, if a wiring from Bob to Alice is performed, one argues using the $\boldsymbol{P}^{(B\to A)}_\lambda$'s from decomposition \eqref{eq:signBA}. Hence, no exposure is possible for TO-LHS assemblages, guaranteeing consistency with bilocal wirings (as well as generic 1W-LOCCs from trusted to untrusted parts) as free operations of steering. 
We note that, even though this redefinition prevents the exposure effect from creating steering, the effect still has, as illustrated by the exposure theorems, a relevant transformation power, especially when applied to steerable assemblages.  As an example, there are assemblages that can only violate a Bell inequality across $AB|C$ after the exposure protocol.

On the other hand, when all $\lambda$-dependent behaviors in Eqs.\ (\ref{eq:signAB},\ref{eq:signBA}) are fully non-signaling, then the assemblage is called \emph{non-signaling LHS} (NS-LHS). 
There exists TO-LHS assemblages that are not NS-LHS, which proves that the latter is a strict subset of the former.
In Supplementary Notes \ref{app:strict_subset}, we provide a quantum and a supra-quantum example of TO-LHS assemblages that are not NS-LHS.

This definition based on TO-LHS models is strictly different from previous definitions of steering in the literature. In \cite{Cavalcanti2017}, $\boldsymbol P_\lambda^{(\boldsymbol{AB})}$ from Eq.\eqref{eq:LHS} is restricted to non-signaling distributions, which coincides with the NS-LHS definition. In \cite{Cavalcanti2015a} $\boldsymbol P_\lambda^{(\boldsymbol{AB})}$ is further restricted to quantum-realizable bipartite distributions, in what constitutes the quantum-LHS model, see Fig.\ \ref{fig:setsLHS}. A fully factorizable $\boldsymbol P_\lambda^{(\boldsymbol{AB})}$, as mentioned in \cite{Uola2019}, represents an even further restriction, and the corresponding model only allows for classical correlations between Alice, Bob, and Charlie.

These examples have another consequence for the definition of steering. At times has the definition of steering been stated as entanglement that can be certified with the reduced information content of a semi-DI setting \cite{Wiseman2007,Cavalcanti2015a}. In fact, even with steering defined independently of entanglement certification, never to our knowledge had there been an instance of one being present without the other. We have nevertheless found cases of entanglement certification in the semi-DI scenario without steering, dissociating these two notions: the latter is sufficient, but not necessary, for the former. This is seen from the quantum-realizable examples of a TO-LHS assemblage that is not NS-LHS in Supplementary Notes \ref{app:strict_subset}. They can be decomposed as in Eqs.\ (\ref{eq:signAB},\ref{eq:signBA}), but only with distributions $P_{a,b|x,y,\lambda}^{(A,B\to B,A)}$ that are signaling, hence, not quantum. As such, a quantum system without $AB|C$ entanglement is unable to produce such an assemblage, i.e.\ entanglement can be certified in $AB|C$. On the other hand, since it is TO-LHS, the assemblage has no steering in the same bipartition (details in Supplementary Notes \ref{app:strict_subset}).

Furthermore, the redefinition above automatically implies also a redefinition of genuinely multipartite steering (GMS). We present this explicitly in Supplementary Notes \ref{sec:def_gen_multipartite}. There, we follow the approach of Ref.~\cite{Cavalcanti2015a} in that a fixed trusted-versus-untrusted partition is kept. However, instead of defining GMS as incompatibility with quantum-LHS assemblages (i.e. with $\lambda$-dependent behaviors with explicit quantum realizations) as in \cite{Cavalcanti2015a}, we use the more general TO-LHS ones. This reduces the set of genuinely multipartite steerable assemblages safely, i.e.\ without introducing room for exposure. The dissociation of steering and entanglement certification also happens in this genuine multipartite case.

\subsection*{Experimental implementation}\label{sec:experimental}
The exposure procedure was experimentally implemented using entangled photons produced via spontaneous parametric down conversion. The experimental setup is shown in Fig.\,\ref{fig:setup}. A photon pair is generated in the Bell state $|\Phi^+\rangle = \left(|00\rangle+|11\rangle\right)/\sqrt{2}$,  where $|0\rangle$ ($|1\rangle$) stands for horizontal (vertical) polarization of the photons  \cite{Kwiat1999}. The photons in the signal mode ($s$) pass through a calcite beam displacer (BD), which creates two momentum modes (paths) depending on the polarization. This results in a tripartite  GHZ state, {where the extra qubit is the path degree of freedom of the photons in $s$.}  Alice's and Bob's qubits are the polarization and path of the photons in mode $s$, respectively, while Charlie's qubit is the polarization of the photons in mode $i$. Projective measurements onto all the degrees of freedom required for state tomography  are performed as described below.

\begin{figure}[tb]
	\centering
	\includegraphics[width=0.99\columnwidth]{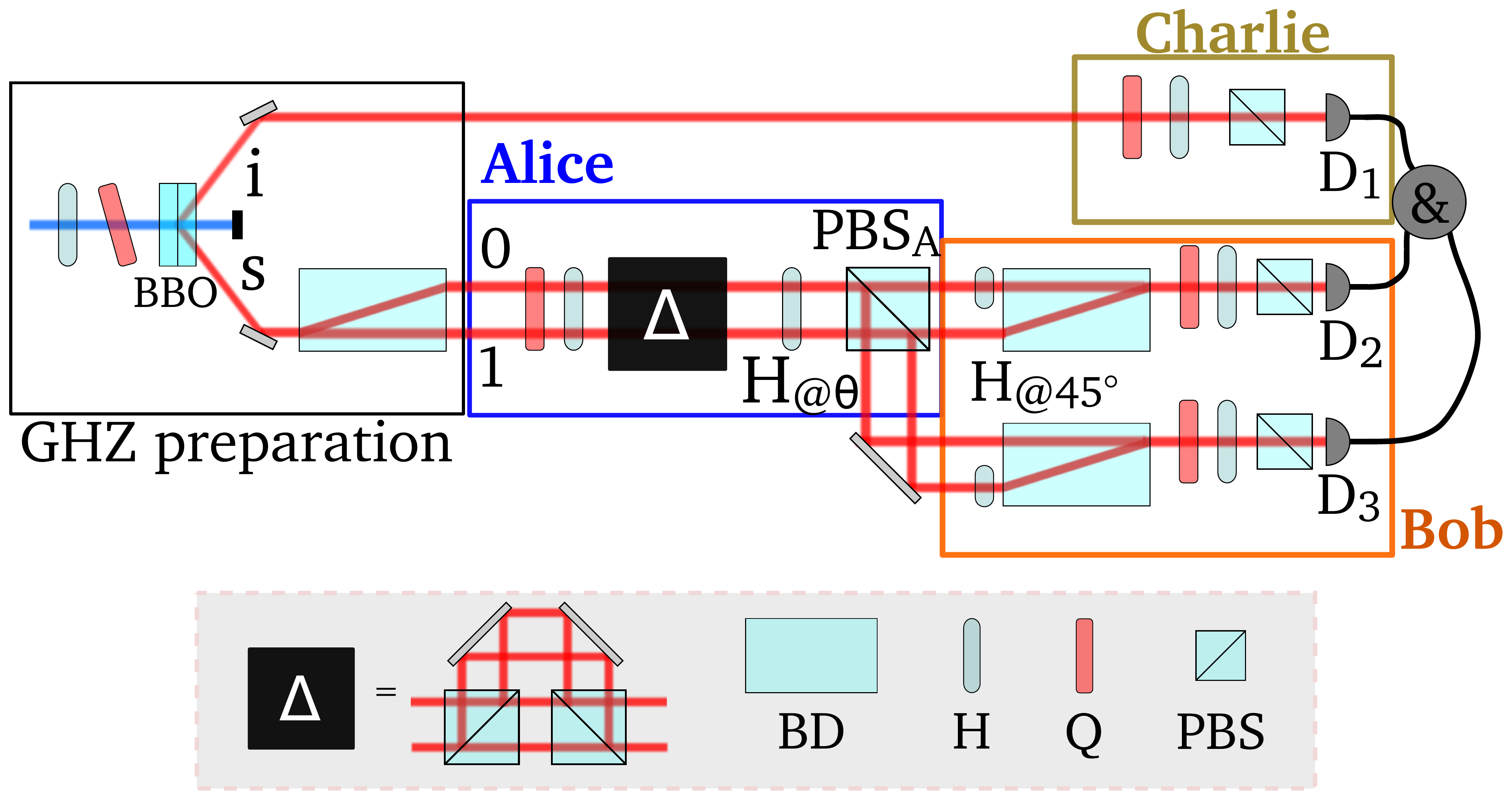}
	\caption{
Experimental setup. Two crossed-axis BBO crystals are pumped by a He-Cd laser centered at 325 nm, producing pairs of photons at 650 nm entangled in the polarization degree of freedom \cite{Kwiat1999}. The signal ($s$) photon is sent through a BD which deviates only the horizontal-polarization component, producing a tripartite GHZ state on two photons using polarization and path degrees of freedom. Idler ($i$) photons are sent directly to Charlie's polarization measurements. Signal photons are first measured in polarization by Alice, then Bob maps his path qubit onto a polarization qubit for his measurements. $H$ stands for half-wave plate, $Q$ for quarter-wave plate and $PBS$ for polarizing beam splitter. 
	}
	\label{fig:setup}
\end{figure} 

\par
To implement the wiring from Fig.\ \ref{fig:wiring}\textbf{b}, Alice's polarization measurements are realized before Bob's measurements onto the path degree of freedom. 
Alice's results are read from the output of PBS$_A$, which determines whether D$_2$ ($a=0$) or D$_3$ ($a=1$) clicks. For Alice's trivial measurement ($x=0$), crucial for the original assemblage to be LHS-decomposable, both her wave plates located before the imbalanced interferometer (represented by $\Delta$) are kept at $0^{\circ}$, and H$_{@\theta}$ is adjusted to $22.5^{\circ}$. The role of $\Delta$ is to remove the coherence between horizontal and vertical polarization components, ensuring that the photon exits PBS$_A$ randomly, independent of the input polarization state. For $x=1$, Alice's wave plates are set to project the polarization on the $X$ eigenstates,  the interferometer and H$_{@\theta}$ ($\theta=0^{\circ}$) play no role.  Bob performs his projective measurements by first mapping the path degrees of freedom onto polarization using BDs and then projecting the polarization state using his set of wave plates and  PBSs, as was realized in Ref.\  \cite{Farias2012}.  To reconstruct the assemblage in Eq.\ \eqref{eq:quantumorig}, measurements for $y=0$ and $y=1$ are made in both detectors D$_2$ and D$_3$, varying the angle of the wave plates in Bob's box. To collect the data corresponding to the wired assemblage \eqref{eq:quantumsteerable} only the $y=0$ measurement is made in D$_2$ ($a=0$) and only $y=1$ is made in D$_3$ ($a=1$), enforcing that Bob's input equals Alice's output ($y=a$).

\begin{figure}
	\centering
	\includegraphics[width=0.99\columnwidth]{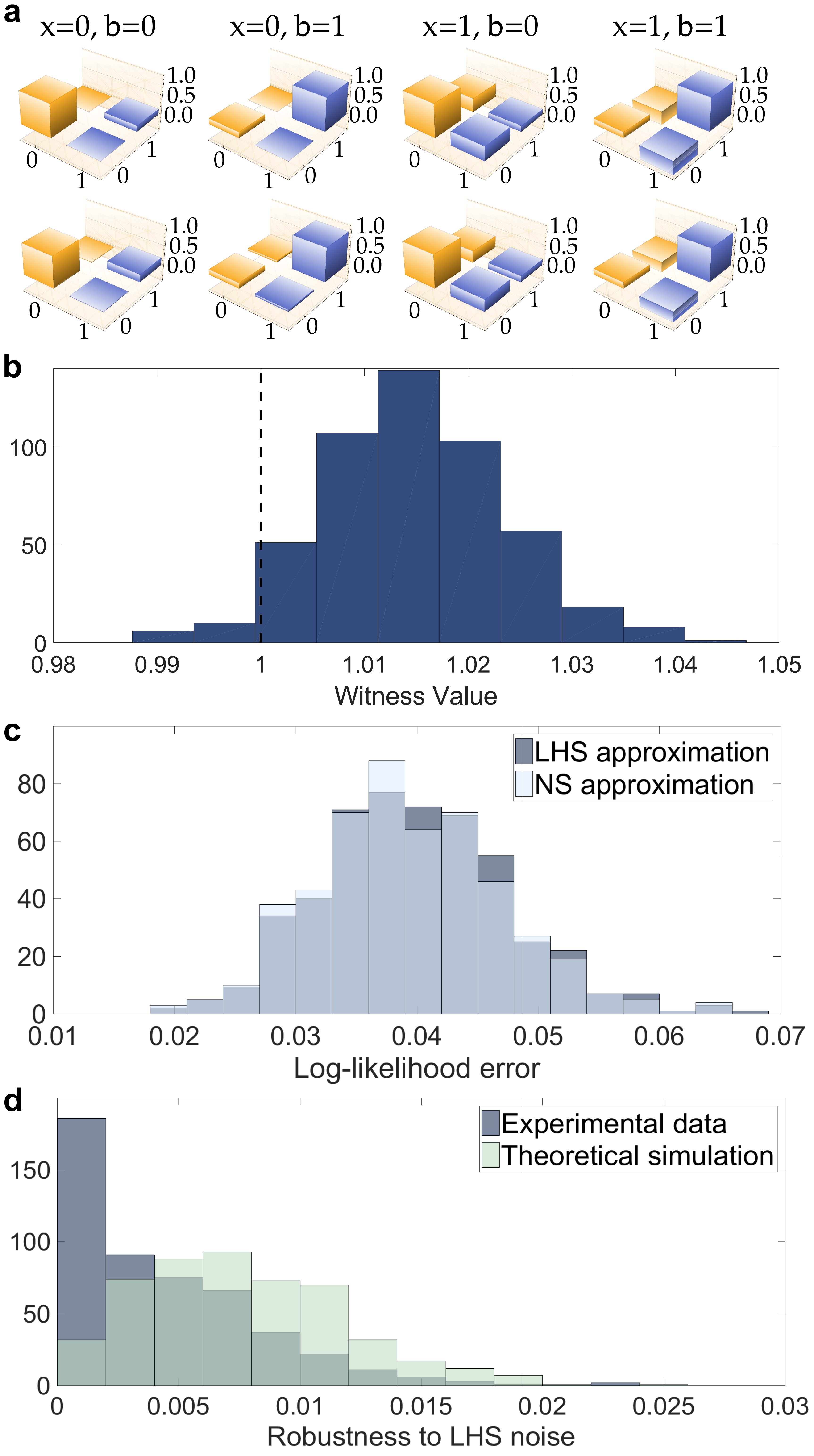}
	\caption{Experimental results. \textbf{a}, \textbf{b} Experimental assemblages after $y=a$ wiring. \textbf{a} Real part of Charlie's conditional density matrices,  theoretical (top) and experimental (bottom). \textbf{b} Steering-witness histogram. The witness value is $1.015\pm0.009$, meaning that the experimental assemblage is more than one standard deviation above the steering threshold (dashed line). \textbf{c}, \textbf{d} Compatibility of the tripartite experimental assemblage with the naive (LHS) definition of unsteerability [Eq.\ \eqref{eq:LHS}]. \textbf{c} Histogram of the error of approximating the tripartite assemblage by an NS and an LHS assemblage, showing that the error of assuming the LHS decomposition is as small as that of the physically necessary NS assumption. \textbf{d} From the best NS approximation to the experimental data, histogram of the LHS-robustness, a measure of deviations from the set \textsf{LHS}. Even with all experimental error, there is only a residual amount of robustness, fully compatible with that of the theoretical LHS assemblage solely under finite-statistics error. All histograms come from Monte Carlo simulation assuming Poisson distributions.}
	\label{fig:wiredass}
\end{figure} 

The assemblage $\boldsymbol\sigma^{(\text{GHZ})}$ was obtained experimentally by performing state tomography on Charlie's system for each measurement setting and outcome of Alice and Bob. Sixteen density matrices (plotted in Figure \ref{fig:assembtrip}, in Supplementary Material) are obtained through maximum likelihood, and the assemblage presents a fidelity-like measure of $98.2\pm0.2\%$  compared to the theoretical one (see Methods). The experimental wired assemblage is shown in Fig.\ \ref{fig:wiredass}\textbf{a}, and returns a fidelity of $98.1\pm0.6\%$ with respect to the theoretical wired assemblage given in \eqref{eq:quantumsteerable}.

An exact LHS decomposition of the experimental assemblage is not feasible due to imperfections and finite statistics --- in fact, assemblages reproducing raw experimental data exactly are not even physical, since they disobey the NS principle \cite{Cavalcanti2015a}. To show that the experimental tripartite assemblage is statistically compatible with an LHS decomposition, we proceed as follows: First, we assume the photocounts obtained for each measured projector are averages of Poisson distributions; with a Monte Carlo simulation, we sample many times each of these distributions and reconstruct the corresponding assemblages. Second, for each reconstructed assemblage, we find the physical (NS) assemblage that best approximates it through maximum-likelihood estimation, as well as the best LHS approximation for comparison. As an initial indication of LHS-compatibility, the log-likelihood error of both approximations is extremely similar, see Fig.\ \ref{fig:wiredass}\textbf{c}. Third, for the NS approximations we calculate the LHS-robustness \cite{Sainz2016a}, a measure which is zero for all LHS assemblages. For comparison, we repeat the procedure starting with simulated finite-photocount statistics from the theoretical LHS assemblage from Eq.\ \eqref{eq:quantumorig}. In Fig.\ \ref{fig:wiredass}\textbf{d} we see that the experimental robustness has a sizable zero component and  a distribution fully compatible with that of an LHS assemblage under finite measurement statistics.

To show that the experimental wired assemblage is steerable, we tested it on the optimal steering witness $W$ with respect to assemblage \eqref{eq:quantumsteerable} (see Supplementary Notes \ref{app:proof_exposure_universal_quantum}). This returned a value $1.015\pm0.009 \nleqslant 1$ (theoretical: $1.0721\nleqslant1$), where the inequality violation implies steering, see Fig.\ \ref{fig:wiredass}\textbf{b}. This allows us to conclude that the bipartite wired assemblage is indeed steerable. The experimental error was calculated using 500 assemblages also from a Monte Carlo simulation of measurement results with Poisson photocount statistics.

Using the same experimental setup, we can also experimentally demonstrate super-exposure of Bell nonlocality. As argued above, the initial experimental assemblage is compatible with an LHS model. Therefore, no matter what measurement Charlie makes, the corresponding Bell behavior will be compatible with an LHV model. Hence, we must only show that the experimental wired assemblage is Bell nonlocal. In Ref.\  \cite{Taddei2016}, a necessary and sufficient criterion for Bell nonlocality of assemblages was derived: 
Given Alice and Bob's wired measurements ($y=a$) with input bit $x$ and output bit $b$, to maximally violate a Bell inequality, Charlie performs von Neumann measurements in the $2Z+X$ and $X$ bases, labeled by input bit $z$, obtaining binary output result $c$. They thus obtain sixteen probabilities $P(b,c|x,z)$, which are used to calculate the Clauser-Horne-Shimony-Holt (CHSH) inequality \cite{Clauser1969}.
We obtained an experimental violation of $2.21\pm0.04\nleqslant2$ (theoretical prediction: $2.29\nleqslant2$), showing Bell nonlocality.

 This experiment is sufficient to for a proof-of-principle demonstration of both exposure of  steering and super-exposure of Bell nonlocality.  We note that strict demonstration of these phenomena in their appropriate DI scenarios requires a realization with space-like separation between the parties (locality loophole), as well high-efficiency source and detectors (fair-sampling assumption).

%%%%%%%%%%%%%%%%%%%%%%%%%%%%%%%%%%%%%%%%%%%%%%%%%%%%%%%%%%%%%%%%%%%%%%%%%%%%%%%%%%%%%%%%%%%%%%%
%%%%%%%%%%%%%%%%%%%%%%%%%%%%%%%%%%%%%%%%%%%%%%%%%%%%%%%%%%%%%%%%%%%%%%%%%%%%%%%%%%%%%%%%%%%%%%%
 
\section*{Discussion}\label{sec:conclusion} 
We have demonstrated that the traditional definition of multipartite steering for more than one untrusted party based on decomposability in terms of generic bilocal hidden-state models  presents inconsistencies with a widely accepted, basic notion of locality. We have also shown how, according to such definition, a broad set of steerable (exposure) and even Bell-nonlocal (super-exposure) assemblages would be created from scratch, e.g.\ by bilocal wirings acting on a seemingly unsteerable assemblage, i.e.\ an LHS one. A surprising discovery that we have made is the fact that exposure of quantum nonlocality is a universal effect, in the sense that all steering assemblages as well as Bell behaviors can be obtained as the result of an exposure protocol starting from bilocal correlations in a scenario with one more untrusted party. This highlights the power of exposure as a resource-theoretic transformation. However, we also delimit such power: we prove a no-go theorem for multi-black-box universal steering bits: there exists no single assemblage with many untrusted and one trusted party from which all assemblages with one untrusted and one trusted party can be obtained through generic free operations of steering. To restore operational consistency, we offer a redefinition of both bipartite steering in multipartite scenarios and genuinely multipartite steering that does not leave room for creating correlations from scratch.
Finally, both steering exposure and Bell nonlocality super-exposure have been demonstrated experimentally using an optical implementation. 
This is to our knowledge the first experimental observation of exposure of quantum nonlocality reported, not only in semi device-independent scenarios but also in fully device-independent ones, as originally predicted in \cite{Gallego2012,Bancal2013}.

Finally, we mention practical implications that our results might have. Steering in the scenario we work on, with a single trusted party, has been shown to be particularly relevant for the task of quantum secret sharing \cite{Xiang2017,Kogias2017}. In it, the trusted party deals a secret to the untrusted parties, who must be able to access it only when cooperating, not independently. A form of steering that is only observable when such parties cooperate, as in the exposure protocol, fits this mold quite specifically. This indicates a potential application of our results, possibly in conjunction with the open question of other joint operations able to achieve exposure.

\section*{Methods}
\subsection*{Experimental Assemblages}
Let us describe the quantum state and the assemblages produced in our experiment in more detail.
Although we treat two of the qubits as black boxes, in order to ensure that the resulting assemblage is coming up from quantum measurements performed onto a GHZ, we first made a state tomography to determine the tripartite quantum state. This can be done without adding any optical element to the setup. By varying the angles on Alice's quarter-wave plate and half-wave plate before the imbalanced interferometer, we set her apparatus to make any tomographic measurement in polarization if we set $H_{@\theta}$ to $0^{\circ}$. The tomographic projections for the path degree of freedom of photons in $s$ and polarization of photons in $i$ is done using the set of wave plates just before detectors D$_1$ and D$_2$, respectively. Using the collected coincidence counts we reconstructed the tripartite quantum state by maximum likelihood. The reconstructed density matrix is shown on Figure\,\ref{fig:ghz}. The experimental state presents  fidelity with GHZ state equals to $0.981\pm0.004$. 

\begin{figure}[tb]
	\centering
	\includegraphics[width=0.95\columnwidth]{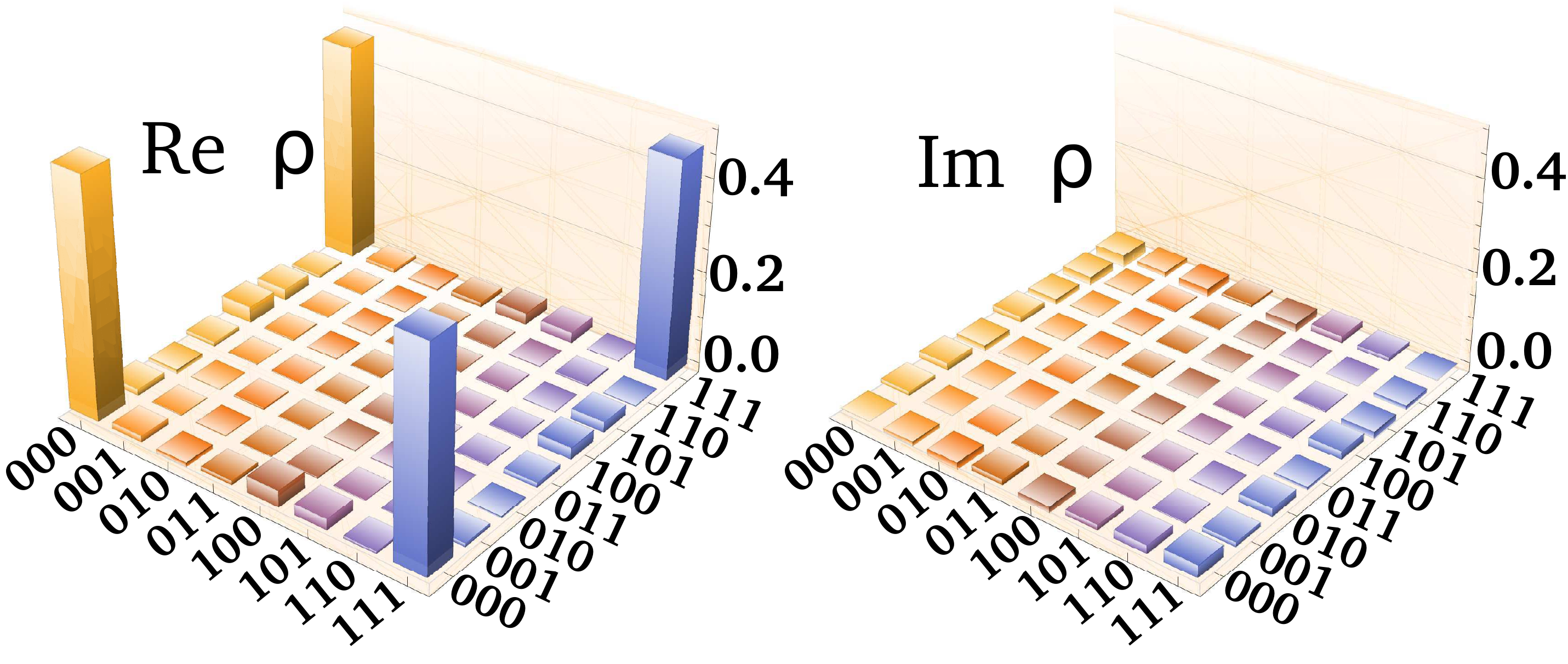}
	\caption{Real and imaginary parts of the experimental reconstructed GHZ state. Colors are for visualization purposes only. }\label{fig:ghz}
\end{figure} 

Each element of the tripartite assemblage is composed of Charlie's conditional quantum state and  the conditional probability $P_{a,b|x,y}$ for the black boxes. All sixteen experimental Charlie's density matrices are shown in Figure\,\ref{fig:assembtrip} (Supplementary Material) in comparison with the corresponding theoretical ones. The associated conditional probabilities are also shown. 

For the wired assemblage, the expected conditional probability of each outcome is $\frac12$; the experimental values are $0.46\pm0.01$, $0.54\pm0.01$, $0.49\pm0.01$, $0.51\pm0.01$ (following the order in Fig.\ \ref{fig:wiredass}\textbf{a}). The imaginary components of the density matrix average to $0.05\pm0.02$ (theoretical: zero). 

\subsection*{Assemblage Fidelity}
We can see by visual inspection that the experimental and corresponding theoretical assemblage elements shown in Figs.\ \ref{fig:wiredass} and \ref{fig:assembtrip} (Supplementary Material) are similar. To quantify this similarity we use a mean assemblage fidelity between two assemblages $\boldsymbol{\sigma}_1=\{P_1(\mathbf{a}|\mathbf{x})\varrho_1(\mathbf{a}|\mathbf{x})\}$ and $\boldsymbol\sigma_2=\{P_2(\mathbf{a}|\mathbf{x})\varrho_2(\mathbf{a}|\mathbf{x})\}$ defined by
\begin{multline}
 F(\boldsymbol\sigma_1,\boldsymbol\sigma_2)=\\
\frac{1}{N_x}\sum_{\mathbf{x},\mathbf{a}}\sqrt{P_1(\mathbf{a}|\mathbf{x})P_2(\mathbf{a}|\mathbf{x})}\mathcal{F}\left(\varrho_1(\mathbf{a}|\mathbf{x}),\varrho_2(\mathbf{a}|\mathbf{x})\right),
\end{multline}
where $\mathbf{x}$ ($\mathbf{a}$) is a list of inputs (outputs) of all black boxes, $N_x$ is the number of different measurement choices, and $\mathcal{F}(\varrho_1,\varrho_2)$ is the usual fidelity between two quantum states. The numerical values of assemblage fidelity in the main text are calculated with this definition. The above defined fidelity can be seen as a mean of the fidelities of the quantum parts weighted by the square root of blackbox probabilities. It has the property of being $1$ if all elements of the two assemblages are equal and vanishes if all quantum states are orthogonal. \\

%%%%%%%%%%%%%%%%%%%%%%%%%%%%%%%%%%%%%%%%%%%%%%%%%%%%%%%%%%%
%%%%%%%%%%%%%%%%%%%%%%%%%%%%%%%%%%%%%%%%%%%%%%%%%%%%%%%%%%%

\section*{Data and code availability}
The datasets and programming codes generated and/or analyzed during the current study are available from the corresponding author on reasonable request.

%\section*{Code availability}
%The programming codes generated during the current study are available from the corresponding author on reasonable request.

\begin{acknowledgments} 
We thank Elie Wolfe and an anonymous referee for independently pointing out a mistake in an earlier version of our manuscript. 
The authors acknowledge financial support from the Brazilian agencies CNPq (PQ grants 311416/2015-2, 304196/2018-5 and INCT-IQ), FAPERJ (PDR10 E-26/202.802/2016, JCN E-26/202.701/2018, E-26/010.002997/2014, E-26/202.7890/2017), CAPES (PROCAD2013), and the Serrapilheira Institute (grant number Serra-1709-17173).  SPW received support from Fondo Nacional de Desarrollo Cient\'{i}fico y Tecnol\'{o}gico (ANID) (1200266) and  ANID – Millennium Science Initiative Program – ICN17\_012.
\end{acknowledgments}

\section*{Competing Interests}
The authors declare no competing interests.

\section*{Author Contributions}
The theorems were derived by MMT (analytical) and RVN (codes). MMT and LA performed the causal analysis, and wrote most of the manuscript, with contributions from all authors. GHA and SPW have designed the experiment, which was performed by TLS and GHA. TLS and RVN analyzed the results. SPW and LA conceived the original idea of exploring wirings as a resource.

%%%%%%%%%%%%%%%%%%%%%%%%%%%%%%%%%%%%%%%%%%%%%%%%%%%%%%%%%%%%%%%%%%%%%%%%%%%%%%%%%%%%%%%%%%%%%%%
%%%%%%%%%%%%%%%%%%%%%%%%%%%%%%%%%%%%%%%%%%%%%%%%%%%%%%%%%%%%%%%%%%%%%%%%%%%%%%%%%%%%%%%%%%%%%%%

%
\clearpage

%%%%%%%%%%%%%%%%%%%%%%%%%%%%%%%%%%%%%%%%%%%%%%%%%%%%%%%%%%%%%%%%%%%%%%%%%%%%%%%%%%%%%%%%%%%%%%%
%%%%%%%%%%%%%%%%%%%%%%%%%%%%%%%%%%%%%%%%%%%%%%%%%%%%%%%%%%%%%%%%%%%%%%%%%%%%%%%%%%%%%%%%%%%%%%%

\title{Supplemental Material to:\\Exposure of multipartite quantum nonlocality}
\maketitle

\renewcommand{\thetable}{S\arabic{table}}\setcounter{table}{0}
\renewcommand{\thethm}{S\arabic{thm}}\setcounter{thm}{0}
\renewcommand{\thedfn}{S\arabic{dfn}}\setcounter{dfn}{0}

\onecolumngrid

This Supplementary Material is composed of Figure \ref{fig:assembtrip} and Supplementary Notes, divided into Sections \ref{app:proof_exposure_universal} through \ref{app:proof_nobits}. Citations refer to the list of references of the main paper.

\twocolumngrid

\begin{figure*}
	\centering
	\vspace{1cm}
	\includegraphics[width=.95\textwidth]{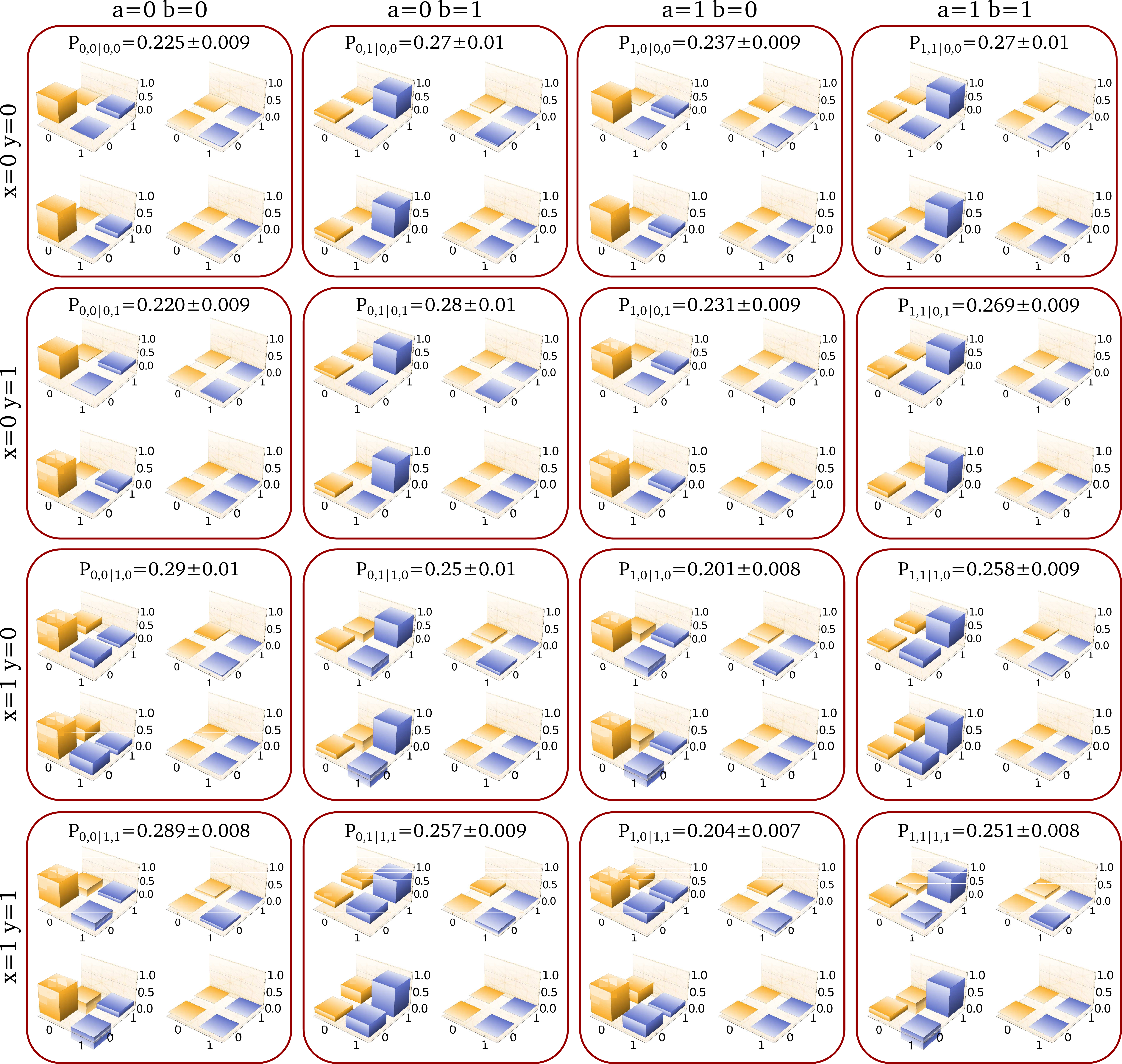}
	\vspace{.2cm}
	\caption{Theoretical and experimental reconstructed assemblages for different values of inputs $x, y$ and outputs $a, b$, as mentioned in Methods. Colors are for visualization purposes only. Each box shows the joint probability of  measurement for the black boxes, real (top left) and imaginary (top right) parts of the experimental density matrix of Charlie's partition, and real (bottom left) and imaginary (bottom right) parts of theoretical Charlie's density matrix. The theoretical probability is $0.25$ for all measurement choices and measurement outputs.}
	\vspace{.5cm}
	\label{fig:assembtrip}
\end{figure*}
\clearpage

\section*{Supplementary Notes}
\appendix{
\renewcommand\appendixname{}
\section{Universal exposure of quantum nonlocality}
\label{app:proof_exposure_universal}

In this section we prove Theorem \ref{th:universal}, i.e., that the wiring produces the desired targets and that the source assemblage $\boldsymbol\sigma^{(\text{initial})}$ and behavior $\boldsymbol{P}^{(\text{initial})}$ in Eqs.\ (\ref{eq:genactiv},\ref{eq:genblackboxactiv}) admit an LHS and an LHV models, respectively, across the bipartition $AB|C$.

\begin{proof}%[Proof of the universal exposure]
It is straightforward to check that applying the wiring $y=a$ to Eqs.\ \eqref{eq:genactiv} and \eqref{eq:genblackboxactiv} of the main text, the target assemblage and behavior are obtained, i.e., $\sum_a\sigma_{a,b|x,y=a}^{(\text{initial})}=\sigma_{b|x}^{(\text{target})}$ and $\sum_aP^{(\text{initial})}(a,b,c|x,y=a,z)=P^{(\text{target})}(b,c|x,z)$.
 
We construct an explicit LHS model for the source assemblage $\boldsymbol\sigma^{(\text{initial})}$. It is given by
\begin{subequations}
\begin{align}
P_{\lambda}=\frac12 \Tr\left(\sigma_{\lambda_0|\lambda_1}^{(\text{target})}\right)  ,  \ \ &\varrho_{\lambda}=\frac{\sigma_{\lambda_0|\lambda_1}^{(\text{target})}}{\Tr\left(\sigma_{\lambda_0|\lambda_1}^{(\text{target})}\right)}, \label{eq:LHSorig1}\\
P_{a,b|x,y;\lambda}  & =\delta_{\lambda_0,b}\ \delta_{\lambda_1,x\oplus a\oplus y} \ ,
\label{eq:LHSorig2}
\end{align}\label{eq:LHSorig}\end{subequations}
where $\lambda=(\lambda_0,\lambda_1)$ is a two-bit hidden variable. 

For the Bell behavior, this expression readily lends itself for a local hidden-variable decomposition of $\boldsymbol P^{(\text{initial})}$ on $AB|C$,  $P^{(\text{initial})}(a,b,c|x,y,z) = \sum_\lambda P_{\lambda}P_{a,b|x,y;\lambda} P(c|z;\lambda)$ with the same bipartite distribution from Eq.~\eqref{eq:LHSorig2} and 
\begin{align}
P_{\lambda}=\frac{P(\lambda_0|\lambda_1)}2; \ \ \ & P(c|z;\lambda)  = \frac{P^{(\text{target})}(\lambda_0,c|\lambda_1,z)}{P(\lambda_0|\lambda_1)} \ ,\label{eq:LHVorig1}
\end{align}
where $P(\lambda_0|\lambda_1):=\sum_cP^{(\text{target})}(\lambda_0,c|\lambda_1,z)$.
\end{proof}

\section{Quantum-realizable exposure of quantum nonlocality}
\label{app:proof_exposure_universal_quantum}
In this section we prove Theorem \ref{th:quantum_protocol}, i.e., that the physically-realizable source assemblage $\boldsymbol\sigma^{(\text{GHZ})}$ in Eq.\ \eqref{eq:quantumorig} admits an LHS model across the bipartition $AB|C$, that the resulting wired assemblage is that of Eq.\ \eqref{eq:quantumsteerable}, and that the latter is both steerable and Bell nonlocal.
\begin{proof}%[Proof of  quantum-realizable exposure and super-exposure]
The LHS decomposition for Eq.\ \eqref{eq:quantumorig} is found via semi-definite programming (SDP). SDP is a convex optimization procedure for linear objective functions that is particularly useful in the semi-DI scenario \cite{Cavalcanti2017}. The numerical results in this case allow one to find analytic formulas for the decomposition, namely 
\begin{subequations}\begin{align}
P_{\lambda}= \frac14 ;\ \  \ \varrho_{\lambda} &= \frac{\mathbb1}2 + \frac{(-1)^{\lambda_0}}{2\sqrt2} \left[ Z+(-1)^{\lambda_1}X\right] ; \label{eq:quantumLHSdecomp1}\\
P_{a,b|x,y;\lambda} &=  \delta_{\lambda_0,b} \frac{1+x(-1)^{a+y+\lambda_1}}2 \  , \label{eq:quantumLHSdecomp2}
\end{align}\label{eq:quantumLHSdecomp}\end{subequations}
where again $\lambda=(\lambda_0,\lambda_1)$ is a two-bit hidden variable. 

Let us now prove the steerability and Bell-nonlocality of assemblage \eqref{eq:quantumsteerable}.
Steerability: with an SDP, we have obtained an assemblage-like object $W=\{w_{a|x}\}_{a,x}$ that serves as a steering witness, i.e.\ it establishes the inequality $\sum_{a,x}\Tr\left[w_{a|x}\sigma_{a|x}\right]\leqslant1$, which can only be violated if assemblage $\boldsymbol\sigma=\{\sigma_{a|x}\}_{a,x}$ is steerable.
Optimized for assemblage \eqref{eq:quantumsteerable}, the witness returns a value of $1.0721$ and can be cast as
\begin{equation}
w_{0|0} = \begin{bmatrix}
p & -c \cr -c & 1-p
\end{bmatrix},\quad w_{0|1} = \begin{bmatrix}
q & p/2 \cr p/2 & -q
\end{bmatrix},
\label{eq:witnessW}\end{equation}
with $p =\frac{1}{2}(1+\frac{1}{\sqrt{5}}),\,c \approx 0.1382,\,q \approx 0.2236$, and $w_{1|x} = Y\,w_{0|x}\,Y,\,x=0,1$.
Bell-nonlocality: The necessary and sufficient criterion from \cite{Taddei2016} yields an optimal violation of the Clauser-Horne-Shimony-Holt (CHSH) inequality of $|-\frac{\sqrt5+1}{\sqrt2}|\approx2.29\nleqslant2$, attained when Charlie makes von Neumann measurements in the eigenbases of $2Z+X$ and $X$. 
\end{proof}

\section{On the sets of LHS assemblages, TO-LHS assemblages, and NS-LHS assemblages}
\label{app:strict_subset}

\begin{table*}[t]%
\setlength{\tabcolsep}{0.5em} % for the horizontal padding
\normalsize%\centering
\begin{tabular}{cc|cc|c}
$a$&$b$&$x$&$y$&$\sigma_{a,b|x,y}^{W}$ \\ \hline
0&0&0&0& $\frac16\left[2\eta^2\ket0\bra0+(1+\sqrt{1-\eta^2}-\eta^2/2)\ket1\bra1+\eta(1+\sqrt{1-\eta^2})X\right]$ \\ 
0&1&0&0& $\frac16\left[2(1-\eta^2)\ket0\bra0+\eta^2/2\ket1\bra1-\eta\sqrt{1-\eta^2}X\right]$ \\ 
1&0&0&0& $\frac16\left[2(1-\eta^2)\ket0\bra0+\eta^2/2\ket1\bra1-\eta\sqrt{1-\eta^2}X\right]$ \\ 
1&1&0&0& $\frac16\left[2\eta^2\ket0\bra0+(1-\sqrt{1-\eta^2}-\eta^2/2)\ket1\bra1-\eta(1-\sqrt{1-\eta^2})X\right]$ \\ \hline
0&0&0&1& $\frac1{12}\left[2(1+2\eta\sqrt{1-\eta^2})\ket0\bra0+(1-\eta+\sqrt{1-\eta^2}-\eta\sqrt{1-\eta^2})\ket1\bra1+(1+\eta+\sqrt{1-\eta^2}-2\eta^2)X\right]$ \\ 
0&1&0&1& $\frac1{12}\left[2(1-2\eta\sqrt{1-\eta^2})\ket0\bra0+(1+\eta+\sqrt{1-\eta^2}+\eta\sqrt{1-\eta^2})\ket1\bra1-(1-\eta+\sqrt{1-\eta^2}-2\eta^2)X\right]$ \\ 
1&0&0&1& $\frac1{12}\left[2(1-2\eta\sqrt{1-\eta^2})\ket0\bra0+(1-\eta-\sqrt{1-\eta^2}+\eta\sqrt{1-\eta^2})\ket1\bra1-(1+\eta-\sqrt{1-\eta^2}-2\eta^2)X\right]$ \\ 
1&1&0&1& $\frac1{12}\left[2(1+2\eta\sqrt{1-\eta^2})\ket0\bra0+(1+\eta-\sqrt{1-\eta^2}-\eta\sqrt{1-\eta^2})\ket1\bra1+(1-\eta-\sqrt{1-\eta^2}-2\eta^2)X\right]$ \\ \hline
$a$&$b$&1&0& $\sigma_{a,b|1,0}^{W}=\sigma_{b,a|0,1}^{W}$\\ \hline
0&0&1&1& $\frac16\left[2(1-\eta^2)\ket0\bra0+(1-\eta-(1-\eta^2)/2)\ket1\bra1+\sqrt{1-\eta^2}(1-\eta)X\right]$ \\ 
0&1&1&1& $\frac16\left[2\eta^2\ket0\bra0+(1-\eta^2)/2\ket1\bra1+\eta\sqrt{1-\eta^2}X\right]$ \\ 
1&0&1&1& $\frac16\left[2\eta^2\ket0\bra0+(1-\eta^2)/2\ket1\bra1+\eta\sqrt{1-\eta^2}X\right]$ \\ 
1&1&1&1& $\frac16\left[2(1-\eta^2)\ket0\bra0+(1+\eta-(1-\eta^2)/2)\ket1\bra1-\sqrt{1-\eta^2}(1+\eta)X\right]$ \\ 
\end{tabular}
\caption{Example quantum assemblage to demonstrate strict inclusion of \textsf{NS-LHS} in \textsf{TO-LHS}.}
\label{tab:Wassemblage}
\end{table*}

We now state a theorem that sustains Fig.\ \ref{fig:setsLHS}, concerning the inclusion relations between the sets \textsf{Q-LHS}, \textsf{NS-LHS}, \textsf{TO-LHS}, and \textsf{LHS}.
\begin{thm}
\textsf{Q-LHS} $\subset$ \textsf{NS-LHS} $\subset$ \textsf{TO-LHS} $\subset$ \textsf{LHS}, and these relations also hold strictly if we restrict to quantum-realizable assemblages.\label{th:sets_LHS}\end{thm}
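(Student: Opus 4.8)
\emph{Plan.} The idea is to get the four non-strict inclusions directly from the definitions and then to certify strictness --- also within the quantum-realizable regime --- by producing a separating assemblage at each level, two of which are already before us. The non-strict inclusions require no computation. A quantum-realizable bipartite behavior is in particular non-signaling, so every Q-LHS decomposition \eqref{eq:LHS} is also an NS-LHS one: $\textsf{Q-LHS}\subseteq\textsf{NS-LHS}$. A fully non-signaling $\boldsymbol P_\lambda^{(AB)}$ is non-signaling both from Bob to Alice and from Alice to Bob, so a single NS-LHS decomposition simultaneously serves as \eqref{eq:signAB} and \eqref{eq:signBA}: $\textsf{NS-LHS}\subseteq\textsf{TO-LHS}$. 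Either decomposition in Definition \ref{def:TOLHS_NSLHS} is itself a valid LHS decomposition \eqref{eq:LHS}: $\textsf{TO-LHS}\subseteq\textsf{LHS}$. Restricting all four sets to quantum-realizable assemblages keeps these inclusions. For strictness of $\textsf{TO-LHS}\subsetneq\textsf{LHS}$ I would reuse $\boldsymbol\sigma^{(\text{GHZ})}$ of Theorem \ref{th:quantum_protocol}: it is quantum-realizable and LHS [Eq.\ \eqref{eq:quantumLHSdecomp}], yet it cannot be TO-LHS, since otherwise its $A\to B$ decomposition plus the normalization identity \eqref{eq:norm} would render the wired assemblage (under $y=a$) a valid LHS assemblage, contradicting the steerability of Eq.\ \eqref{eq:quantumsteerable} proven there.

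\emph{Strictness of $\textsf{NS-LHS}\subsetneq\textsf{TO-LHS}$.} Here I would take the quantum assemblage $\boldsymbol\sigma^{W}$ of Table \ref{tab:Wassemblage} (with $\eta$ in the range for which every entry is a valid sub-normalized state and the $a,b$-marginal is $x,y$-independent). Membership in $\textsf{TO-LHS}$ I would establish by writing out the two decompositions \eqref{eq:signAB} and \eqref{eq:signBA} explicitly --- found by SDP, as in Sec.\ \ref{app:proof_exposure_universal_quantum} --- with $\boldsymbol P_\lambda^{(A\to B)}$ non-signaling from Bob to Alice and $\boldsymbol P_\lambda^{(B\to A)}$ non-signaling from Alice to Bob.

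\emph{Non-membership in $\textsf{NS-LHS}$.} I would use that this is an SDP feasibility question: expanding each non-signaling $\boldsymbol P_\lambda^{(AB)}$ over the finitely many vertices $D^{v}$ of the (binary-input/output) non-signaling polytope, $\boldsymbol\sigma$ is NS-LHS iff $\sigma_{a,b|x,y}=\sum_{v}D^{v}_{a,b|x,y}\,\tilde\varrho_{v}$ for positive operators $\tilde\varrho_{v}\geqslant0$ summing to $\varrho^{(C)}$; likewise, TO-LHS membership is the SDP obtained by replacing the non-signaling polytope by the two polytopes of behaviors non-signaling from Bob to Alice and from Alice to Bob. For $\boldsymbol\sigma^{W}$ the first SDP is infeasible, and I would produce from its dual an operator-valued witness $\{F_{a,b|x,y}\}$ with $\sum_{a,b,x,y}\Tr[F_{a,b|x,y}\,\sigma'_{a,b|x,y}]\leqslant0$ for all $\boldsymbol\sigma'\in\textsf{NS-LHS}$ but strictly positive on $\boldsymbol\sigma^{W}$. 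The same $\boldsymbol\sigma^{W}$ is then a fortiori not Q-LHS, and --- via the equivalence (cf.\ Ref.\ \cite{Cavalcanti2015a}) between Q-LHS assemblages and those produced by $AB|C$-separable states --- it also exhibits $AB|C$-entanglement certification without steering.

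\emph{Strictness of $\textsf{Q-LHS}\subsetneq\textsf{NS-LHS}$, and the main obstacle.} This case needs a genuinely new example, and it is the crux of the proof. I would construct a quantum-realizable assemblage that admits an explicit NS-LHS decomposition one of whose $\lambda$-components is supra-quantum --- e.g.\ a PR-box-type behavior admixed in small weight, diluted so that the $AB$-marginal stays within the quantum set --- and then certify that it is nevertheless not Q-LHS, i.e.\ that every state reproducing it is $AB|C$-entangled, by an $AB|C$-entanglement witness or by infeasibility at a finite level of the symmetric-extension (DPS) hierarchy outer-approximating $AB|C$-separable states. The hard part is exactly this: exhibiting an assemblage that is at once quantum-realizable, NS-LHS and not Q-LHS --- one that forces an irreducibly supra-quantum (but still non-signaling) hidden behavior into \emph{every} NS-LHS decomposition while remaining quantum as an assemblage --- together with a clean separability-violation certificate; a secondary difficulty is promoting the numerical SDP dual for $\boldsymbol\sigma^{W}\notin\textsf{NS-LHS}$ to an analytic witness over the relevant range of $\eta$.
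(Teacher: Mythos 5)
Your non-strict inclusions are exactly the paper's, and so are two of the three separations. For \textsf{TO-LHS} $\subsetneq$ \textsf{LHS} the paper likewise uses the quantum-realizable assemblage of Eq.~\eqref{eq:quantumorig} together with the exposure argument (a TO-LHS decomposition would, via the normalization step of Eq.~\eqref{eq:norm}, render the wired assemblage unsteerable, contradicting Theorem~\ref{th:quantum_protocol}). For \textsf{NS-LHS} $\subsetneq$ \textsf{TO-LHS} it uses precisely the noisy-$W$ assemblage with an SDP-derived NS-LHS witness and an explicit deterministic TO-LHS decomposition; your plan for that step, including the corollary that entanglement certification dissociates from steering, coincides with the paper's.

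The genuine gap is the step you yourself flag as the crux, \textsf{Q-LHS} $\subsetneq$ \textsf{NS-LHS}: you propose constructing from scratch a quantum-realizable assemblage that is NS-LHS but not Q-LHS (a diluted PR-box admixture plus a DPS-hierarchy or entanglement-witness certificate) and explicitly stop short of producing it, so your proof is incomplete there. The paper needs no new construction at this point: it simply invokes Ref.~\cite{Cavalcanti2015a}, where quantum-realizable assemblages that violate the quantum-LHS criterion while admitting an NS-LHS model are already exhibited. The separation you treat as the main obstacle is therefore a known result, and the missing ingredient in your argument is a citation rather than the delicate construction you outline. A secondary, milder caveat concerns your \textsf{NS-LHS} $\subsetneq$ \textsf{TO-LHS} step: like the paper's, it ultimately rests on numerically obtained SDP certificates (the witness bounding all NS-LHS assemblages, and the time-ordered decomposition in both directions), which must actually be exhibited for fixed parameter values rather than left as ``found by SDP''; the paper supplies them explicitly in Tables~\ref{tab:NS-LHSwitness} and~\ref{tab:sigmalambda} at $v=0.64$.
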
 

\begin{proof} 
From the definitions in Eqs.\ (\ref{eq:LHS},\ref{eq:TOdefinition}), it is clear that \textsf{Q-LHS} $\subseteq$ \textsf{NS-LHS} $\subseteq$ \textsf{TO-LHS} $\subseteq$ \textsf{LHS}. The phenomenon of exposure implies that the assemblages in Eqs.\ (\ref{eq:genactiv},\ref{eq:quantumorig}) belong to \textsf{LHS}, but not to \textsf{TO-LHS}, so the inclusion of one in the other is strict (notice that assemblage \eqref{eq:quantumorig} is quantum realizable). In \cite{Cavalcanti2015a}, assemblages that violate the quantum-LHS model are found from tripartite quantum states under independent measurements on Alice and Bob, hence must admit an NS-LHS model, which shows that \textsf{Q-LHS} $\subset$ \textsf{NS-LHS} strictly. 
To prove that \textsf{NS-LHS} is a strict subset of \textsf{TO-LHS}, we need an example of a TO-LHS assemblage that does not belong to \textsf{NS-LHS}. 

One way to do so is to follow the reasoning of \cite{Gallego2012}: take the time-ordered decomposition of the distribution $\boldsymbol P$ from \cite{Gallego2011} that violates the guess-your-neighbor's-input (GYNI) inequality and find the $\varrho_{\lambda}$ that best mimic the marginal $P_{a|x,\lambda}$ --- this effectively amounts to a one-time program \cite{Roehsner2018}. The resulting TO-LHS assemblage violates GYNI, hence is not NS-LHS, but it is also supra-quantum, since no quantum state can violate the GYNI inequality.

To find a quantum-realizable assemblage that belongs to \textsf{TO-LHS}, but not to \textsf{NS-LHS}, we take inspiration from Bancal \emph{et al} \cite{Bancal2013}, who have found Bell behaviors obtainable from noisy $W$ states with the analogous DI-scenario property (TO-LHV, but not NS-LHV). A pure $W$ state is given by $\ket W:=(\ket{001}+\ket{010}+\ket{100})/\sqrt3$, its noisy version with visibility $v$, by 
\begin{equation}
\rho_W= v \ \ket W\bra W+ (1-v)\ \mathbb1^{(ABC)}/8 \ .
\label{eq:noisyWstate}
\end{equation}
Alice and Bob make von Neumann measurements on the bases $\eta X+ \sqrt{1-\eta^2}Z$ ($x$ or $y=0$) and $\sqrt{1-\eta^2}X-\eta Z$ ($x$ or $y=1$), with $\eta\approx0.97177$, which yields the assemblage  %0.9717670911271831
\begin{equation}
\sigma_{a,b|x,y}^{\text{noisy } W} = v \ \sigma_{a,b|x,y}^{W} + (1-v)\ \mathbb1^{C}/8 \ ,
\label{eq:noisyWassemb}
\end{equation}
where $\sigma_{a,b|x,y}^{W}$ is given in Table \ref{tab:Wassemblage}.
These measurements, together with an appropriate measurement by Charlie, yield in \cite{Bancal2013} a DI-inequality violation requiring minimal visibility.

We obtain the optimal NS-LHS witness $\boldsymbol{W}=\{W_{abxy}\}_{a,b,x,y}$ for $\sigma_{a,b|x,y}^{\text{noisy } W}$ for $v=0.58$, i.e. $\boldsymbol{W}$ satisfies the property
\begin{equation}
-1 \leq \sum_{a,b,x,y}\Tr[W_{abxy}\, \sigma^{\text{NS-LHS}}_{a,b\vert x,y}]\leq 0
\label{eq:witness_bound}
\end{equation}
for every NS-LHS assemblage $\boldsymbol\sigma^{\text{NS-LHS}}$. Its components $W_{abxy}$ are given in Table \ref{tab:NS-LHSwitness}. This witness is violated by $\sigma_{a,b|x,y}^{\text{noisy } W}$ from $v\approx0.58$ onwards; for $v=0.64$, it returns 0.0301.

\begin{table*}[htb]%
\setlength{\tabcolsep}{0.5em} % for the horizontal padding
\normalsize%\centering
\begin{tabular}{|c|cccc|}\hline
\diagbox{$x,y$}{$a,b$} &$00$&$01$&$10$&$11$\\ \hline 
&&&&\\[-2ex]
$00$ &
$\begin{bmatrix}-0.0056 & \mspc 0.1194 \\ \mspc 0.1194 & -0.1205	\end{bmatrix}$&
$\begin{bmatrix}-0.1394 & -0.0603\\ -0.0603 & \mspc 0.0662				\end{bmatrix}$&
$\begin{bmatrix}-0.1394 & -0.0603\\ -0.0603 & \mspc 0.0662				\end{bmatrix}$&
$\begin{bmatrix}\mspc 0.0239 & -0.0656 \\ -0.0656 & -0.1869				\end{bmatrix}$\\[2ex]
$01$ &
$\begin{bmatrix}\mspc 0.0233 & -0.0324 \\ -0.0324 & -0.1706				\end{bmatrix}$&
$\begin{bmatrix}-0.2194 & \mspc 0.1346 \\ \mspc 0.1346 & -0.0079	\end{bmatrix}$&
$\begin{bmatrix}-0.0560 &\mspc 0.1109\\ \mspc 0.1109 &\mspc 0.0114\end{bmatrix}$&
$\begin{bmatrix}-0.0417 & -0.1490 \\ -0.1490 & -0.1079						\end{bmatrix}$\\[2ex]
$10$ &
$\begin{bmatrix}\mspc 0.0233 & -0.0324 \\-0.0324 & -0.1706				\end{bmatrix}$&
$\begin{bmatrix}-0.0560 &\mspc 0.1109 \\ \mspc 0.1109&\mspc 0.0114\end{bmatrix}$&
$\begin{bmatrix}-0.2194 &\mspc 0.1346 \\ \mspc 0.1346 & -0.0079		\end{bmatrix}$&
$\begin{bmatrix}-0.0417 & -0.1490 \\ -0.1490 & -0.1079						\end{bmatrix}$ \\[2ex]
$11$&
$\begin{bmatrix}-0.0410 & -0.0560 \\ -0.0560 & \mspc 0.0863				\end{bmatrix}$&
$\begin{bmatrix}\mspc 0.0665 &\mspc 0.0431\\ \mspc 0.0431 &-0.2194\end{bmatrix}$&
$\begin{bmatrix}\mspc 0.0665 &\mspc 0.0431\\ \mspc 0.0431 &-0.2194\end{bmatrix}$&
$\begin{bmatrix}	-0.4431 &  -0.0727 \\   -0.0727 &  \mspc 0.0239	\end{bmatrix}$ \\[2ex] \hline
\end{tabular}
\caption{Elements of witness $W_{abxy}$ used to demonstrate strict inclusion of \textsf{NS-LHS} in \textsf{TO-LHS}.}
\label{tab:NS-LHSwitness}
\end{table*}

\begin{table*}[htb]%
\begin{tabular}{|cc||cc||cc||cc|}\hline
\ \ \ $\lambda$ \ \ \ & $\sigma_\lambda$&\ \ \ $\lambda$\ \ \ &$\sigma_\lambda$&\ \ \ $\lambda$\ \ \ &$\sigma_\lambda$&\ \ \ $\lambda$\ \ \ & $\sigma_\lambda$ \rule{0pt}{2.5ex} \\[.5ex] \hline
{0}&$\begin{bmatrix} 	\mspc0.0045 & \mspc0.0013\\	\mspc0.0013 & \mspc0.0009\\\end{bmatrix}$&
{1}&$\begin{bmatrix}	\mspc0.0928 & \mspc0.0246\\	\mspc0.0246 & \mspc0.0070\\\end{bmatrix}$&
{2}&$\begin{bmatrix}	\mspc0.0036 & \mspc0.0011\\	\mspc0.0011 & \mspc0.0009\\\end{bmatrix}$&
{3}&$\begin{bmatrix}	\mspc0.0244 & \mspc0.0068\\	\mspc0.0068 & \mspc0.0024\\\end{bmatrix}$\rule{0pt}{4ex}\\[2ex]
\hline
{4}&$\begin{bmatrix}	\mspc0.0055 & \mspc0.0058\\	\mspc0.0058 & \mspc0.0071\\\end{bmatrix}$&
{5}&$\begin{bmatrix}	\mspc0.0084 & \mspc0.0071\\	\mspc0.0071 & \mspc0.0067\\\end{bmatrix}$&
{6}&$\begin{bmatrix}	\mspc0.0066 & \mspc0.0076\\	\mspc0.0076 & \mspc0.0098\\\end{bmatrix}$&
{7}&$\begin{bmatrix}	\mspc0.0100 & \mspc0.0090\\	\mspc0.0090 & \mspc0.0089\\\end{bmatrix}$\rule{0pt}{4ex}\\[2ex] 
\hline
{8}&$\begin{bmatrix}	\mspc0.0048 & -0.0029\\	-0.0029 & \mspc0.0025\\\end{bmatrix}$&
{9}&$\begin{bmatrix}	\mspc0.0118 & -0.0052\\	-0.0052 & \mspc0.0029\\\end{bmatrix}$&
{10}&$\begin{bmatrix}	\mspc0.0040 & -0.0026\\	-0.0026 & \mspc0.0024\\\end{bmatrix}$&
{11}&$\begin{bmatrix}	\mspc0.0079 & -0.0037\\	-0.0037 & \mspc0.0024\\\end{bmatrix}$\rule{0pt}{4ex}\\[2 ex]
\hline
{12}&$\begin{bmatrix}	\mspc0.0007 & -0.0004\\	-0.0004 & \mspc0.0024\\\end{bmatrix}$&
{13}&$\begin{bmatrix}	\mspc0.0008 & -0.0002\\	-0.0002 & \mspc0.0014\\\end{bmatrix}$&
{14}&$\begin{bmatrix}	\mspc0.0006 & -0.0004\\	-0.0004 & \mspc0.0029\\\end{bmatrix}$&
{15}&$\begin{bmatrix}	\mspc0.0007 & -0.0002\\	-0.0002 & \mspc0.0015\\\end{bmatrix}$\rule{0pt}{4ex}\\[2 ex]
\hline
{16}&$\begin{bmatrix}	\mspc0.0219 & \mspc0.0118\\	\mspc0.0118 & \mspc0.0064\\\end{bmatrix}$&
{17}&$\begin{bmatrix}	\mspc0.0001 & \mspc0.0002\\	\mspc0.0002 & \mspc0.0010\\\end{bmatrix}$&
{18}&$\begin{bmatrix}	\mspc0.0028 & -0.0005\\	-0.0005 & \mspc0.0001\\\end{bmatrix}$&
{19}&$\begin{bmatrix}	\mspc0.0002 & -0.0002\\	-0.0002 & \mspc0.0004\\\end{bmatrix}$\rule{0pt}{4ex}\\[2 ex]
\hline
{20}&$\begin{bmatrix}	\mspc0.0612 & \mspc0.0411\\	\mspc0.0411 & \mspc0.0277\\\end{bmatrix}$&
{21}&$\begin{bmatrix}	\mspc0.0034 & \mspc0.0126\\	\mspc0.0126 & \mspc0.0467\\\end{bmatrix}$&
{22}&$\begin{bmatrix}	\mspc0.0007 & -0.0001\\	-0.0001 & \mspc0.0001\\\end{bmatrix}$&
{23}&$\begin{bmatrix}	\mspc0.0002 & -0.0002\\	-0.0002 & \mspc0.0004\\\end{bmatrix}$\rule{0pt}{4ex}\\[2 ex]
\hline
{24}&$\begin{bmatrix}	\mspc0.0007 & \mspc0.0003\\	\mspc0.0003 & \mspc0.0002\\\end{bmatrix}$&
{25}&$\begin{bmatrix}	\mspc0.0001 & \mspc0.0001\\	\mspc0.0001 & \mspc0.0010\\\end{bmatrix}$&
{26}&$\begin{bmatrix}	\mspc0.0135 & -0.0036\\	-0.0036 & \mspc0.0010\\\end{bmatrix}$&
{27}&$\begin{bmatrix}	\mspc0.0074 & -0.0106\\	-0.0106 & \mspc0.0153\\\end{bmatrix}$\rule{0pt}{4ex}\\[2 ex]
\hline
{28}&$\begin{bmatrix}	\mspc0.0006 & \mspc0.0003\\	\mspc0.0003 & \mspc0.0003\\\end{bmatrix}$&
{29}&$\begin{bmatrix}	\mspc0.0010 & \mspc0.0073\\	\mspc0.0073 & \mspc0.0545\\\end{bmatrix}$&
{30}&$\begin{bmatrix}	\mspc0.0008 & -0.0002\\	-0.0002 & \mspc0.0001\\\end{bmatrix}$&
{31}&$\begin{bmatrix}	\mspc0.0015 & -0.0025\\	-0.0025 & \mspc0.0045\\\end{bmatrix}$\rule{0pt}{4ex}\\[2 ex]
\hline
{32}&$\begin{bmatrix}	\mspc0.0020 & \mspc0.0006\\	\mspc0.0006 & \mspc0.0016\\\end{bmatrix}$&
{33}&$\begin{bmatrix}	\mspc0.0049 & \mspc0.0013\\	\mspc0.0013 & \mspc0.0013\\\end{bmatrix}$&
{34}&$\begin{bmatrix}	\mspc0.0017 & \mspc0.0006\\	\mspc0.0006 & \mspc0.0018\\\end{bmatrix}$&
{35}&$\begin{bmatrix}	\mspc0.0038 & \mspc0.0011\\	\mspc0.0011 & \mspc0.0014\\\end{bmatrix}$\rule{0pt}{4ex}\\[2 ex]
\hline
{36}&$\begin{bmatrix}	\mspc0.0020 & -0.0013\\	-0.0013 & \mspc0.0022\\\end{bmatrix}$&
{37}&$\begin{bmatrix}	\mspc0.0031 & -0.0012\\	-0.0012 & \mspc0.0014\\\end{bmatrix}$&
{38}&$\begin{bmatrix}	\mspc0.0018 & -0.0013\\	-0.0013 & \mspc0.0024\\\end{bmatrix}$&
{39}&$\begin{bmatrix}	\mspc0.0026 & -0.0011\\	-0.0011 & \mspc0.0015\\\end{bmatrix}$\rule{0pt}{4ex}\\[2 ex]
\hline
{40}&$\begin{bmatrix}	\mspc0.0037 & -0.0000\\	-0.0000 & \mspc0.0009\\\end{bmatrix}$&
{41}&$\begin{bmatrix}	\mspc0.0261 & \mspc0.0009\\	\mspc0.0009 & \mspc0.0007\\\end{bmatrix}$&
{42}&$\begin{bmatrix}	\mspc0.0029 & -0.0000\\	-0.0000 & \mspc0.0010\\\end{bmatrix}$&
{43}&$\begin{bmatrix}	\mspc0.0125 & \mspc0.0005\\	\mspc0.0005 & \mspc0.0008\\\end{bmatrix}$\rule{0pt}{4ex}\\[2 ex]
\hline
{44}&$\begin{bmatrix}	\mspc0.0069 & -0.0040\\	-0.0040 & \mspc0.0032\\\end{bmatrix}$&
{45}&$\begin{bmatrix}	\mspc0.0227 & -0.0094\\	-0.0094 & \mspc0.0045\\\end{bmatrix}$&
{46}&$\begin{bmatrix}	\mspc0.0055 & -0.0034\\	-0.0034 & \mspc0.0030\\\end{bmatrix}$&
{47}&$\begin{bmatrix}	\mspc0.0140 & -0.0060\\	-0.0060 & \mspc0.0033\\\end{bmatrix}$\rule{0pt}{4ex}\\[2 ex]
\hline
{48}&$\begin{bmatrix}	\mspc0.0062 & \mspc0.0036\\	\mspc0.0036 & \mspc0.0022\\\end{bmatrix}$&
{49}&$\begin{bmatrix}	\mspc0.0011 & \mspc0.0051\\	\mspc0.0051 & \mspc0.0258\\\end{bmatrix}$&
{50}&$\begin{bmatrix}	\mspc0.0031 & -0.0006\\	-0.0006 & \mspc0.0002\\\end{bmatrix}$&
{51}&$\begin{bmatrix}	\mspc0.0007 & -0.0011\\	-0.0011 & \mspc0.0018\\\end{bmatrix}$\rule{0pt}{4ex}\\[2 ex]
\hline
{52}&$\begin{bmatrix}	\mspc0.0009 & \mspc0.0005\\	\mspc0.0005 & \mspc0.0003\\\end{bmatrix}$&
{53}&$\begin{bmatrix}	\mspc0.0001 & \mspc0.0005\\	\mspc0.0005 & \mspc0.0034\\\end{bmatrix}$&
{54}&$\begin{bmatrix}	\mspc0.0035 & -0.0008\\	-0.0008 & \mspc0.0003\\\end{bmatrix}$&
{55}&$\begin{bmatrix}	\mspc0.0193 & -0.0303\\	-0.0303 & \mspc0.0479\\\end{bmatrix}$\rule{0pt}{4ex}\\[2 ex]
\hline
{56}&$\begin{bmatrix}	\mspc0.0044 & \mspc0.0023\\	\mspc0.0023 & \mspc0.0013\\\end{bmatrix}$&
{57}&$\begin{bmatrix}	\mspc0.0002 & \mspc0.0004\\	\mspc0.0004 & \mspc0.0024\\\end{bmatrix}$&
{58}&$\begin{bmatrix}	\mspc0.0287 & -0.0055\\	-0.0055 & \mspc0.0011\\\end{bmatrix}$&
{59}&$\begin{bmatrix}	\mspc0.0008 & -0.0011\\	-0.0011 & \mspc0.0018\\\end{bmatrix}$\rule{0pt}{4ex}\\[2 ex]
\hline
{60}&$\begin{bmatrix}	\mspc0.0008 & \mspc0.0004\\	\mspc0.0004 & \mspc0.0003\\\end{bmatrix}$&
{61}&$\begin{bmatrix}	\mspc0.0001 & \mspc0.0002\\	\mspc0.0002 & \mspc0.0015\\\end{bmatrix}$&
{62}&$\begin{bmatrix}	\mspc0.0967 & -0.0246\\	-0.0246 & \mspc0.0063\\\end{bmatrix}$&
{63}&$\begin{bmatrix}	\mspc0.0206 & -0.0300\\	-0.0300 & \mspc0.0440\\\end{bmatrix}$\rule{0pt}{4ex}\\[2 ex]
\hline
\end{tabular}
\caption{Non-normalized states $\sigma_\lambda$ needed in Eq.\ \eqref{eq:DeterministicTOLHS} for the TO-LHS decomposition of the assemblage \eqref{eq:noisyWassemb}.}
\label{tab:sigmalambda}
\end{table*}

However, there is a TO-LHS decomposition of $\sigma_{a,b|x,y}^{\text{noisy } W}$ for $v=0.64$ (hence for $v<0.64$), which, equivalently to Eq.\ \eqref{eq:TOdefinition}, can be written as
\begin{subequations}
\begin{align}
\sigma_{a,b|x,y}^{\text{noisy W}}
&= \sum_{\lambda} D_\lambda(a|x) D_\lambda(b|x,y)\,\sigma_{\lambda} \label{eq:DeterministicTOLHS1}\\
&= \sum_{\lambda} D_\lambda(a|x,y) D_\lambda(b|y) \,\sigma_{\lambda} \ ,
\label{eq:DeterministicTOLHS2}
\end{align}\label{eq:DeterministicTOLHS}\end{subequations}
where the $D_\lambda$ are deterministic response functions and $\sigma_{\lambda}:=p_\lambda\rho_\lambda$ are non-normalized states. Each $D_\lambda(a|x)$ is specified by $a_x$, the deterministic outcome $a$ conditioned on $x$; the notation follows analogously for $D_\lambda(b|x,y)$, $D_\lambda(a|x,y)$, and $D_\lambda(b|y)$ ($b_{xy}$, $a_{xy}$, and $b_y$, respectively). These are given by
\begin{equation*}
\begin{tabular}{|c|cccccc|}\hline
$\lambda$ & $a_0$ & $a_1$ & $b_{00}$ & $b_{01}$ & $b_{10}$ & $b_{11}$ \\ \hline
0&0&0&0&0&0&0 \\
1&0&0&0&0&0&1 \\
2&0&0&0&0&1&0 \\
3&0&0&0&0&1&1 \\
\multicolumn{7}{c}{$\vdots$}\\
62&1&1&1&1&1&0 \\
63&1&1&1&1&1&1 \\ \hline
\end{tabular}
 \ \ \ \ 
\begin{tabular}{|c|cccccc|}\hline
$\lambda$ & $a_{00}$ & $a_{01}$ & $a_{10}$ & $a_{11}$ & $b_{0}$ & $b_{1}$ \\ \hline
0&0&0&0&0&0&0 \\
1&0&0&0&0&0&1 \\
2&0&0&0&0&1&0 \\
3&0&0&0&0&1&1 \\
\multicolumn{7}{c}{$\vdots$}\\
62&1&1&1&1&1&0 \\
63&1&1&1&1&1&1 \\ \hline
\end{tabular} \ ,
\end{equation*}
where in each table, the six columns to the right are the binary expression of the leftmost column ($\lambda$). 
The states $\sigma_{\lambda}$ are given in Table \ref{tab:sigmalambda}.
\end{proof}

As mentioned in the main text, this example implies that entanglement certification does not coincide with steering. Consider the general form of a quantum state separable in the $AB|C$ partition,
\begin{equation}
\sum_\lambda P_\lambda \ \varrho_\lambda^{AB}\otimes \varrho_\lambda^C \ ,
\label{eq:separableABtoC}
\end{equation}
where $\varrho_\lambda^{AB}$ is the $\lambda$-th hidden state for $AB$ (possibly entangled). Any local measurements made on $AB$ yield an assemblage of the form of Eq.~\eqref{eq:LHS} with a quantum-mechanical $P_{a,b|x,y,\lambda}$. That assemblage would then have a quantum-LHS (hence an NS-LHS) decomposition. Since we have seen that $\sigma_{a,b|x,y}^{\text{noisy W}}$ with $v$ in the appropriate range cannot be NS-LHS-decomposed (much less quantum-LHS decomposed), we conclude that $\sigma_{a,b|x,y}^{\text{noisy W}}$ cannot be produced from local measurements on a quantum state separable on $AB|C$, i.e.\ we certify $AB|C$ entanglement with this assemblage, which is nevertheless unsteerable.

\section{Redefinition of genuinely multipartite steering}
\label{sec:def_gen_multipartite}
Although our discussion has focused on steering along a fixed bipartition, it has a bearing on genuine multipartite steering as well. This concept hinges on bi-separability over all possible bipartitions, as used by D. Cavalcanti \emph{et al} to define genuine multipartite steering in \cite{Cavalcanti2015a}. Interestingly, however, our results can be used to generalize that definition.

\begin{dfn}[Redefinition of genuinely multipartite steering]\label{def:genuine}
An assemblage $\boldsymbol\sigma$ is genuinely multipartite steerable if it does \emph{not} admit a decomposition of the form
\begin{subequations}\begin{eqnarray}
\sigma_{a,b|x,y}	=  \sum_\mu p^{A|BC}_\mu & P_{a|x,\mu} & \sigma_{b|y}^{C}(\mu)  \label{eq:biseparableassemblage1} \\
	+  \sum_\nu			p^{B|AC}_\nu 		  & P_{b|y,\nu}  		& 		\sigma_{a|x}^{C}(\nu)  \label{eq:biseparableassemblage2} \\
	+  \sum_\lambda p^{AB|C}_\lambda & P_{a,b|x,y,\lambda} \ & \varrho^{C}		 (\lambda)  \label{eq:biseparableassemblage3}
\end{eqnarray}\label{eq:biseparableassemblage}\end{subequations}
where the last sum can be any TO-LHS assemblage.
\end{dfn}

The difference from D. Cavalcanti \emph{et al}'s definition is that they consider assemblages obtained from a quantum realization with bi-separable states. Reproducing Eqs.\ (4,5,6) of \cite{Cavalcanti2015a}, a tripartite state $\varrho^{ABC}$ is bi-separable when decomposable as
\begin{subequations}\begin{eqnarray}
\varrho^{ABC}	= \sum_\mu p^{A|BC}_\mu &\ \varrho_\mu^{A}  &\otimes \varrho_\mu^{BC} \label{eq:biseparablestate1}\\
							+ \sum_\nu p^{B|AC}_\nu &\ \varrho_\nu^{B}  &\otimes \varrho_\nu^{AC}  \label{eq:biseparablestate2}\\
							+ \sum_\lambda p^{AB|C}_\lambda &\ \varrho_\lambda^{AB}  &\otimes \varrho_\lambda^{C} \ .\label{eq:biseparablestate3}
\end{eqnarray}\label{eq:biseparablestate}\end{subequations}
Under local measurements on the $A$ and $B$ partitions, this yields a 2DI+1DD assemblage of the form \eqref{eq:biseparableassemblage} (akin to Eqs.\ (7,8,9) of \cite{Cavalcanti2015a}), but
with a distribution $P_{a,b|x,y,\lambda}$ in Eq.\ \eqref{eq:biseparableassemblage3} necessarily quantum-realizable (a subset of NS distributions). In other words, they only allow the sum in Eq.\ \eqref{eq:biseparableassemblage3} to be quantum-realizable NS-LHS assemblages.
Our redefinition, then, reduces the set of genuinely multipartite steerable assemblages.

Morover, we show in Supplementary Notes \ref{app:strict_subset} that there are, in fact, quantum-realizable assemblages affected by this change. These assemblages are decomposable as in Eq.~\eqref{eq:biseparableassemblage} only with a TO-LHS (not NS-LHS) term in Eq.~\eqref{eq:biseparableassemblage3}, and hence their quantum realization requires genuinely multipartite entangled states [i.e.\ not decomposable as Eq.~\eqref{eq:biseparablestate}]. 
Finally, once again entanglement certification is dissociated from steering: the TO-LHS assemblage from Supplementary Notes \ref{app:strict_subset} can be written in a bi-separable decomposition as in Eq.~\eqref{eq:biseparableassemblage}, but cannot be obtained from a bi-separable state as in Eq.~\eqref{eq:biseparablestate}. In other words, it certifies \emph{genuine} multipartite entanglement without steering.

\section{No-go theorem for multi-black-box universal steering bits}
\label{app:proof_nobits}
In contrast to the protocols exploring the capabilities of wirings within the $AB$ partition, in this section we present a no-go theorem limiting their transformation power.
Since it is known \cite{Gallego2015} that in minimal dimension there is no steering bit --- i.e. no ``universal'' minimal-dimension assemblage that can be transformed into any other under 1W-LOCCs --- one can ask whether reduction from a higher number of inputs, outputs or parties allows such a steering bit to be established. We answer in the negative even in minimal dimension. 

\begin{thm}[No pure steering bit with higher number of parties]
\label{th:nobits}
There does not exist any pure $(N-1)$-DI qubit assemblage $\sigma_{\boldsymbol a|\boldsymbol x}^{\text{bit}}$, where $\boldsymbol a=\{a_1,...,a_{N-1}\}$, $\boldsymbol x=\{x_1,...,x_{N-1}\}$ (with finite sets of input and output values), that can be transformed via 1W-LOCCs into all qubit assemblages of minimal dimension $\sigma_{a|x}^{(\text{target})}$.
\end{thm}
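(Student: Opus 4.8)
The plan is to exploit the fact that the restriction to \emph{pure} assemblages pins the candidate bit firmly inside quantum theory, whereas being a \emph{universal} bit would force it to reach assemblages lying \emph{outside} quantum theory. In outline: (i) any pure $(N-1)$-DI qubit assemblage is quantum-realizable; (ii) the free operations of steering (1W-LOCCs from the trusted to the untrusted parts, plus wirings and classical processing among the untrusted parties) send quantum assemblages to quantum assemblages; hence (iii) every $1$-DI qubit assemblage obtainable from $\sigma_{\boldsymbol a|\boldsymbol x}^{\text{bit}}$ by 1W-LOCCs is quantum-realizable; but (iv) there exist no-signaling qubit assemblages with no quantum realization --- post-quantum steering [Sainz2015] --- so no pure bit can reach them all.

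For step (i), I would regard the $N-1$ untrusted devices jointly as a single untrusted party with input $\boldsymbol x$ and output $\boldsymbol a$, so that $\sigma^{\text{bit}}$ becomes a pure \emph{bipartite} qubit assemblage: for each $\boldsymbol x$ the subnormalized states $\{\sigma_{\boldsymbol a|\boldsymbol x}\}_{\boldsymbol a}$ are rank one and sum to a fixed qubit state $\varrho^{(C)}$. By the Hughston--Jozsa--Wootters (HJW) theorem, fixing any purification $\ket{\Omega}_{AC}$ of $\varrho^{(C)}$ (a qubit $A$ already suffices, as all conditional states are rank one), each setting's pure-state decomposition of $\varrho^{(C)}$ is reproduced by some POVM $\{M_{\boldsymbol a|\boldsymbol x}\}_{\boldsymbol a}$ on $A$, giving $\sigma_{\boldsymbol a|\boldsymbol x}=\Tr_A[(M_{\boldsymbol a|\boldsymbol x}\otimes\mathbb1)\,\ketbra{\Omega}{\Omega}_{AC}]$.

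For step (ii), I would write the most general 1W-LOCC taking an $(N-1)$-DI assemblage to a $1$-DI one: shared randomness $\omega\sim P(\omega)$; a (possibly adaptive) wiring of the untrusted boxes plus classical pre/post-processing, yielding an effective input $x'$ and output $a'$; and a quantum operation by Charlie (a channel, or an instrument with one-way classical communication to the untrusted side) correlated with $\omega$. Starting from a realization $\sigma_{\boldsymbol a|\boldsymbol x}=\Tr_A[(M_{\boldsymbol a|\boldsymbol x}\otimes\mathbb1)\rho_{AC}]$, the adaptive wiring together with Charlie's classical message collapses into a single POVM $\{M^\omega_{a'|x'}\}_{a'}$ on $A$, and pushing Charlie's channel $\Lambda_\omega$ through the partial trace and averaging over $\omega$ gives
\begin{equation}
\tilde\sigma_{a'|x'}=\sum_\omega P(\omega)\,\Tr_A\!\big[(M^\omega_{a'|x'}\otimes\mathbb1)\,(\mathrm{id}_A\otimes\Lambda_\omega)(\rho_{AC})\big]=\Tr_{A'}\!\big[(\tilde M_{a'|x'}\otimes\mathbb1)\,\rho'_{A'C}\big],
\end{equation}
with $A'=A\otimes A''$ for a classical register $A''$, $\rho'_{A'C}:=\sum_\omega P(\omega)\,\ketbra{\omega}{\omega}_{A''}\otimes(\mathrm{id}_A\otimes\Lambda_\omega)(\rho_{AC})$, and $\tilde M_{a'|x'}:=\sum_\omega\ketbra{\omega}{\omega}_{A''}\otimes M^\omega_{a'|x'}$ a valid POVM; thus $\tilde\sigma$ is again quantum-realizable. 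Combining with (i) proves (iii), and then (iv) finishes: a post-quantum minimal-dimension assemblage lies outside the reachable set of any pure bit, so no universal pure steering bit exists, for all $N$. I expect the main obstacle to be step (ii): one must be sure that \emph{no} feature of the free operations --- adaptive wirings among the untrusted parties, instruments on Charlie's side with feed-forward, shared randomness correlating Charlie's channel with the untrusted processing --- can leak ``outside'' quantum theory; the dilation above is what makes this precise, the only care being to recast the adaptive wiring as a genuine composite POVM. Steps (i) and (iv) are then essentially invocations of HJW and of the existence of post-quantum steering, respectively.
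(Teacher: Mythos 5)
Your argument collapses at step (iv), and it does so for exactly the reason that makes your step (i) work. The targets in the theorem are \emph{bipartite} assemblages $\{\sigma_{a|x}\}$ with a single untrusted party and a trusted qubit, and by the very Hughston--Jozsa--Wootters/GHJW theorem you invoke in step (i), \emph{every} no-signaling assemblage of this form admits a quantum realization: for each $x$ the ensemble $\{\sigma_{a|x}\}_a$ decomposes the same reduced state $\varrho^{(C)}$, so a fixed purification plus suitable POVMs reproduces it. Post-quantum steering \`a la Sainz \emph{et al.} exists only with two or more untrusted parties (or in generalized bipartite scenarios with extra structure on the trusted side), not in the plain $1$-DI target scenario considered here. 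Hence the set of targets contains no supra-quantum member, your quantum-realizability invariant is satisfied by all of them, and the separation you propose certifies nothing. Steps (i)--(iii) are correct but prove a statement with empty content for this theorem; indeed, since the output of the 1W-LOCC is again a bipartite no-signaling assemblage, its quantum realizability already follows from HJW without the dilation in step (ii).

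The actual obstruction is of a completely different nature: it is not quantum versus supra-quantum but a rigidity/counting argument inside quantum theory. The paper fixes a continuous one-parameter family of perfectly quantum targets $\sigma^\theta_{a_f|x_f}=\tfrac12\,\proj{\psi^\theta(a_f,x_f)}$ (obtainable from a maximally entangled pair) and writes out the most general 1W-LOCC --- a $\theta$-dependent Kraus operator $K^\theta_\omega$ on the trusted qubit plus classical pre/post-processing of the boxes. The proportionality constraints $K^\theta_\omega\ket{\psi(\tilde{\boldsymbol a},\tilde{\boldsymbol x})}\propto\ket{\psi^\theta(\tilde a_f,x_f)}$ then force algebraic relations among the \emph{finitely many} pure states of the candidate bit that must hold for a continuum of values of $\theta$; this forces several of those states to coincide, contradicting the non-single-state structure needed to produce two nonproportional outputs. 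If you want to salvage a proof along your lines you would need a genuinely unreachable target, and in this scenario no such target exists on quantum-realizability grounds alone.
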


\begin{proof}
The proof is similar in spirit to that of Theorem 5 of \cite{Gallego2015}. We consider a pure $(N-1)$-DI qubit assemblage as a candidate for higher-dimensional ``bit'' assemblage. With the more detailed notation of \cite{Gallego2015}, it reads 
\begin{align}
\sigma_{\boldsymbol a|\boldsymbol x}^{\text{bit}}= P_{\boldsymbol A|\boldsymbol X}(\boldsymbol a|\boldsymbol x) \ \ket{\psi({\boldsymbol a,\boldsymbol x})}\bra{\psi({\boldsymbol a,\boldsymbol x})} \ . \label{eq:N-1DI} 
\end{align}
We assume the NS principle only between the DD party and all others, the $N-1$ DI parties may signal to each other at will. We will show that no single choice of $\sigma_{\boldsymbol a|\boldsymbol x}^{\text{bit}}$ can be freely transformed into members of a family of minimal-dimension assemblages $\sigma_{a_f|x_f}^\theta=\frac12\ket{\psi^\theta(a_f,x_f)}\bra{\psi^\theta(a_f,x_f)}$ for all $\theta\in{}]0,\pi/2[{}$, where
\begin{subequations}\begin{align}
\ket{\psi^\theta(0,0)} &= \ket{0} \label{eq:psitheta00}\\
\ket{\psi^\theta(1,0)} &= \ket{1} \label{eq:psitheta10}\\
\ket{\psi^\theta(0,1)} &=\ \ \cos\theta\ket{0} + \sin\theta\ket1 \label{eq:psitheta01}\\
\ket{\psi^\theta(1,1)} &= -\sin\theta\ket{0} + \cos\theta\ket1 \ . \label{eq:psitheta11}
\end{align}\label{eq:psitheta}\end{subequations}

The most general form of a 1W-LOCC applied to $\sigma_{\boldsymbol a|\boldsymbol x}^{\text{bit}}$ is 
\begin{widetext}
\begin{equation}
\sum_{\boldsymbol a, \boldsymbol x, \omega} P_{\boldsymbol X|X_f,\Omega}^\theta(\boldsymbol x|x_f,\omega) \ P_{A_f|\boldsymbol A,\boldsymbol X,\Omega,X_f}^\theta(a_f|\boldsymbol a,\boldsymbol x,\omega,x_f) \ P_{\boldsymbol A|\boldsymbol X}(\boldsymbol a|\boldsymbol x) \ K_\omega^\theta\ket{\psi({\boldsymbol a,\boldsymbol x})}\bra{\psi({\boldsymbol a,\boldsymbol x})}K_\omega^{\theta\dagger} \ ,
\label{eq:1WLOCCapplied}
\end{equation}%\end{widetext}
where $\Omega$ is a variable (with values $\omega$) representing information sent by the quantum party to the classical ones, $P_{\boldsymbol X|X_f,\Omega}^\theta$ and $P_{A_f|\boldsymbol A,\boldsymbol X,\Omega,X_f}^\theta$ are conditional probability distributions, and $K_\omega^\theta$ is a Kraus operator \cite{Gallego2015}; the three may depend on $\theta$. Since this transformed assemblage is intended to equal the rank-1 assemblage $\sigma_{a_f|x_f}^\theta$, we can conclude that  $\forall \ a_f,x_f$
%\begin{widetext}
\begin{multline}
\sum_{\boldsymbol a, \boldsymbol x} P_{\boldsymbol X|X_f,\Omega}^\theta(\boldsymbol x|x_f,\omega)P_{A_f|\boldsymbol A,\boldsymbol X,\Omega,X_f}^\theta(a_f|\boldsymbol a,\boldsymbol x,\omega,x_f) P_{\boldsymbol A|\boldsymbol X}(\boldsymbol a|\boldsymbol x) \ K_\omega^\theta\ket{\psi({\boldsymbol a,\boldsymbol x})}\bra{\psi({\boldsymbol a,\boldsymbol x})}K_\omega^{\theta\dagger} \\ \sim \ket{\psi^\theta(a_f,x_f)}\bra{\psi^\theta(a_f,x_f)} \ ,
\label{eq:1WLOCCequality}
\end{multline}\end{widetext}
where $\sim$ signifies ``is either null or proportional to'' and we have used the fact that the relation, valid for the sum in $\omega$, is also valid for each $\omega$ term. 

We will assume for now that $\sigma_{\boldsymbol a|\boldsymbol x}^{\text{bit}}$ is not a single-state assemblage, i.e., there is no state $\ket{\psi_{\text{single}}}$ such that $\ket{\psi(\boldsymbol a,\boldsymbol x)}=\ket{\psi_{\text{single}}}$ for all $\boldsymbol a,\boldsymbol x$ (for our purposes throughout this proof, states are equal if they differ only by an global phase). 

We now notice that, due to normalization, $\forall \ x_f,\omega$, $\exists \ \tilde{\boldsymbol x}, \tilde{\boldsymbol a}, \tilde a_f$ such that $P_{\boldsymbol X|X_f,\Omega}^\theta(\tilde{\boldsymbol x}|x_f,\omega)\times P_{A_f|\boldsymbol A,\boldsymbol X,\Omega,X_f}^\theta(\tilde a_f|\tilde{\boldsymbol a},\tilde{\boldsymbol x},\omega,x_f) \times P_{\boldsymbol A|\boldsymbol X}(\tilde{\boldsymbol a}|\tilde{\boldsymbol x})\neq0$. For these values, then,
\begin{equation}
K_\omega^\theta\ket{\psi(\tilde{\boldsymbol a},\tilde{\boldsymbol x})}\sim\ket{\psi^\theta(\tilde a_f,x_f)} \ .
\label{eq:Kpsi}
\end{equation}
In fact, there must be at least two different values $\tilde{\boldsymbol a}$ for each $\tilde{\boldsymbol x}$ for which Eq.\ \eqref{eq:Kpsi} is true, with the corresponding pure states $\ket{\psi(\tilde{\boldsymbol a},\tilde{\boldsymbol x})}$ being not all equal: if, for some $\tilde{\boldsymbol x}$ there is a single $\tilde{\boldsymbol a}$ with $P_{\boldsymbol A|\boldsymbol X}(\tilde{\boldsymbol a}|\tilde{\boldsymbol x})\neq0$, then by purity and the NS property between the DD and DI partitions, $\sigma_{\boldsymbol a|\boldsymbol x}^{\text{bit}}$ would be a single-state assemblage; if for all values $\tilde{\boldsymbol a}$, $\ket{\psi(\tilde{\boldsymbol a},\tilde{\boldsymbol x})}$ is the same, it would also be a single-state assemblage due to NS and purity.

Let us now exclude the possibility of $K_\omega^\theta\ket{\psi(\boldsymbol a,\boldsymbol x)}=0$ with $K_\omega^\theta\neq0$. If that were the case, $K_\omega^\theta$ would have a rank-1 support, hence a rank-1 span: $K_\omega^\theta\ket{\psi(\boldsymbol a,\boldsymbol x)}\sim \ket{k_\omega^\theta} \ \forall \ \boldsymbol a,\boldsymbol x$. From \eqref{eq:1WLOCCequality} and the independence of $x_f$ from $\omega$, this would require either $\ket{\psi^\theta(a_f,0)}\propto \ket{k_\omega^\theta}\propto \ket{\psi^\theta(\tilde a_f,1)}$ [contradiction with Eq.\ \eqref{eq:psitheta}] or that, for some value of $x_f$, for the corresponding $\tilde{\boldsymbol x}$, $K_\omega^\theta\ket{\psi(\tilde{\boldsymbol a},\tilde{\boldsymbol x})}=0$ for all $\tilde{\boldsymbol a}$ with $P_{\boldsymbol A|\boldsymbol X}(\tilde{\boldsymbol a}|\tilde{\boldsymbol x})\neq0$ [contradiction with there existing two different states $\ket{\psi(\tilde{\boldsymbol a},\tilde{\boldsymbol x})}$].

Finally, we can conclude from the dependencies of the three probabilities $P_{\boldsymbol X|X_f,\Omega}^\theta, P_{A_f|\boldsymbol A,\boldsymbol X,\Omega,X_f}^\theta, P_{\boldsymbol A|\boldsymbol X}$ on $x_f,\omega,\boldsymbol x,\boldsymbol a,\tilde a_f$, that 
\begin{equation}
K_\omega^\theta\ket{\psi(\tilde{\boldsymbol a},\tilde{\boldsymbol x})}\propto\ket{\psi^\theta(\tilde a_f,x_f)} \ .
\label{eq:Kpsiprop}
\end{equation}
The validity conditions of this equation are as follows: for all $(x_f,\omega)$, there exists some value $\tilde{\boldsymbol x}$ for which \eqref{eq:Kpsiprop} holds; for each $\tilde{\boldsymbol x}$, there are at least two values $\tilde{\boldsymbol a}$ for which \eqref{eq:Kpsiprop} holds; and for each choice of $(x_f,\omega,\tilde{\boldsymbol x},\tilde{\boldsymbol a})$ there is some value $\tilde a_f$ for which \eqref{eq:Kpsiprop} holds. Moreover, for given $\tilde{\boldsymbol x}$, the corresponding $\ket{\psi(\tilde{\boldsymbol a},\tilde{\boldsymbol x})}$ (for varying $\tilde{\boldsymbol a}$) are not all equal.

Let us explore the possible ways of satisfying Eq.\ \eqref{eq:Kpsiprop} by case analysis. A first possibility is that, for the two different values $x_f=0,1$, the values of $\tilde{\boldsymbol x}$ for which \eqref{eq:Kpsiprop} holds intersect at some value $\tilde{\boldsymbol x}_{\text{int}}$. Then $\exists \ \tilde{\boldsymbol a}, \tilde a_{f0}, \tilde a_{f1}$ such that
\begin{equation}\begin{split}
K_\omega^\theta\ket{\psi(\tilde{\boldsymbol a},\tilde{\boldsymbol x}_{\text{int}})}&\propto\ket{\psi^\theta(\tilde a_{f0},x_f=0)} \ , \\
K_\omega^\theta\ket{\psi(\tilde{\boldsymbol a},\tilde{\boldsymbol x}_{\text{int}})}&\propto\ket{\psi^\theta(\tilde a_{f1},x_f=1)} \ ,
\end{split}\label{eq:x_int}
\end{equation}
which is incompatible with Eq.\ \eqref{eq:psitheta}. We are then left with the values $\tilde{\boldsymbol x}$ for $x_f=0$ and $x_f=1$ being all different. Taking the liberty to relabel our variables, let us consider a value $\tilde{\boldsymbol x}=\boldsymbol 0$ for $x_f=0$ and a value $\tilde{\boldsymbol x}=\boldsymbol 1$ for $x_f=1$, ignoring the other possible values of $\tilde{\boldsymbol x}$ for which Eq.\ \eqref{eq:Kpsiprop} holds.
Let us call $\tilde{\boldsymbol a}=\boldsymbol0$ and $\tilde{\boldsymbol a}=\boldsymbol1$ the two values of $\tilde{\boldsymbol a}$ for which, given $\tilde{\boldsymbol x}$, Eq.\ \eqref{eq:Kpsiprop} holds. 
We see that $\tilde a_f$ could take any value for each $\tilde{\boldsymbol a}$. However, if $\tilde a_f$ is the same for the same $(x_f,\tilde{\boldsymbol x})$ and two different $\tilde{\boldsymbol a}$, e.g.,
\begin{equation}\begin{split}
&K_{\omega}\ket{\psi(\boldsymbol0,\boldsymbol1)} \propto \ket{\psi^\theta(0,1)} \\
&K_{\omega}\ket{\psi(\boldsymbol1,\boldsymbol1)} \propto \ket{\psi^\theta(0,1)} \ ,
\end{split}\label{eq:cases_b_c}\end{equation}
then Eq.\ \eqref{eq:Kpsiprop} cannot be satisfied for all $x_f$. This is because $\{\ket{\psi(\boldsymbol0,\boldsymbol1)},\ket{\psi(\boldsymbol1,\boldsymbol1)}\}$ form a basis of the qubit Hilbert space, hence $K_{\omega}$ has a 1-rank span given by $\ket{\psi^\theta(0,1)}$, which does not span $\ket{\psi^\theta(\tilde a_f,0)}$ as needed. Hence $\tilde a_f$ is different for each $\tilde{\boldsymbol a}$ value. 

We can then conclude that, up to relabeling, there must be states $\ket{\psi(\tilde{\boldsymbol a},\tilde{\boldsymbol x})}$ belonging to $\boldsymbol\sigma^{\text{bit}}$ which obey
\begin{subequations}\begin{align}
&K_{\omega}\ket{\psi(\boldsymbol0,\boldsymbol0)}		\propto		\ket{\psi^\theta(0,0)}	\label{eq:case_a00}\\
&K_{\omega}\ket{\psi(\boldsymbol1,\boldsymbol0)}		\propto		\ket{\psi^\theta(1,0)}	\label{eq:case_a10}\\
&K_{\omega}\ket{\psi(\boldsymbol0,\boldsymbol1)}		\propto		\ket{\psi^\theta(0,1)}	\label{eq:case_a01}\\
&K_{\omega}\ket{\psi(\boldsymbol1,\boldsymbol1)}		\propto		\ket{\psi^\theta(1,1)}	\label{eq:case_a11}
\end{align}\label{eq:case_a}\end{subequations}
to obtain the family of assemblages $\{\boldsymbol\sigma^{\theta}\}_{\theta\in{}]0,\pi/2[{}}$. We will choose the following parametrization:
\begin{equation}
\ket{\psi(\tilde{\boldsymbol a},\tilde{\boldsymbol x})} = \cos(\varphi_{\tilde{\boldsymbol a},\tilde{\boldsymbol x}})\ket0+e^{i\alpha_{\tilde{\boldsymbol a},\tilde{\boldsymbol x}}}\sin(\varphi_{\tilde{\boldsymbol a},\tilde{\boldsymbol x}})\ket1 \ , 
\label{eq:parametrization}
\end{equation}
where $\varphi_{\tilde{\boldsymbol a},\tilde{\boldsymbol x}}\in[0,\pi/2]$. It should be noted that $(\varphi_{\tilde{\boldsymbol a},\tilde{\boldsymbol x}},\alpha_{\tilde{\boldsymbol a},\tilde{\boldsymbol x}})$ may depend on $\theta$ through $\tilde{\boldsymbol a},\tilde{\boldsymbol x}$: because $P_{\boldsymbol X|X_f,\Omega}^\theta$ may depend on $\theta$, the values $\tilde{\boldsymbol a},\tilde{\boldsymbol x}$ for which Eq.\ \eqref{eq:Kpsiprop} holds may vary for different values of $\theta$. However, for finitely many values of $\boldsymbol a$, $\boldsymbol x$, there are only finitely many states and finitely many $(\varphi_{{\boldsymbol a},{\boldsymbol x}},\alpha_{{\boldsymbol a},{\boldsymbol x}})$ to pick from, so some choice of states as in Eq.\ \eqref{eq:parametrization} must still be able to satisfy Eq.\ \eqref{eq:case_a} for a continuous set of values $\theta$.

Substituting Eqs.\ \eqref{eq:psitheta} and \eqref{eq:parametrization} in (\ref{eq:case_a00},\ref{eq:case_a10}), respectively, we see that
\begin{equation}
\frac{K^\theta_{\omega00}}{K^\theta_{\omega01}} = - e^{i\alpha_{\boldsymbol1\boldsymbol0}}\tan\varphi_{\boldsymbol1\boldsymbol0}; \ \frac{K^\theta_{\omega10}}{K^\theta_{\omega11}} = - e^{i\alpha_{\boldsymbol0\boldsymbol0}}\tan\varphi_{\boldsymbol0\boldsymbol0};
\label{eq:Kraus_1}
\end{equation}
where $K^\theta_{\omega ij}:=\braket{i}{K^\theta_\omega|j}$. Doing the same in (\ref{eq:case_a01},\ref{eq:case_a11}) and substituting \eqref{eq:Kraus_1}, we find, respectively,
\begin{align}
\frac{K^\theta_{\omega11}}{K^\theta_{\omega01}} & = \tan\theta \frac{\tan\varphi_{\boldsymbol0\boldsymbol1}e^{i\alpha_{\boldsymbol0\boldsymbol1}}-\tan\varphi_{\boldsymbol1\boldsymbol0}e^{i\alpha_{\boldsymbol1\boldsymbol0}}}{\tan\varphi_{\boldsymbol0\boldsymbol1}e^{i\alpha_{\boldsymbol0\boldsymbol1}}+\tan\varphi_{\boldsymbol0\boldsymbol0}e^{i\alpha_{\boldsymbol0\boldsymbol0}}}
\label{eq:Kraus_2} \\
\frac{K^\theta_{\omega11}}{K^\theta_{\omega01}} & = \frac{-1}{\tan\theta} \frac{\tan\varphi_{\boldsymbol1\boldsymbol1}e^{i\alpha_{\boldsymbol1\boldsymbol1}}-\tan\varphi_{\boldsymbol1\boldsymbol0}e^{i\alpha_{\boldsymbol1\boldsymbol0}}}{\tan\varphi_{\boldsymbol1\boldsymbol1}e^{i\alpha_{\boldsymbol1\boldsymbol1}}-\tan\varphi_{\boldsymbol0\boldsymbol0}e^{i\alpha_{\boldsymbol0\boldsymbol0}}} .
\label{eq:Kraus_3}
\end{align}
Equating the two, we have
\begin{equation}\begin{split}
\tan^2\theta\left(\frac{\tan\varphi_{\boldsymbol0\boldsymbol1}e^{i\alpha_{\boldsymbol0\boldsymbol1}}-\tan\varphi_{\boldsymbol1\boldsymbol0}e^{i\alpha_{\boldsymbol1\boldsymbol0}}}{\tan\varphi_{\boldsymbol0\boldsymbol1}e^{i\alpha_{\boldsymbol0\boldsymbol1}}+\tan\varphi_{\boldsymbol0\boldsymbol0}e^{i\alpha_{\boldsymbol0\boldsymbol0}}}\right)+\\
+\left(\frac{\tan\varphi_{\boldsymbol1\boldsymbol1}e^{i\alpha_{\boldsymbol1\boldsymbol1}}-\tan\varphi_{\boldsymbol1\boldsymbol0}e^{i\alpha_{\boldsymbol1\boldsymbol0}}}{\tan\varphi_{\boldsymbol1\boldsymbol1}e^{i\alpha_{\boldsymbol1\boldsymbol1}}-\tan\varphi_{\boldsymbol0\boldsymbol0}e^{i\alpha_{\boldsymbol0\boldsymbol0}}}\right)=0 \ ,
\end{split}\label{eq:tantheta}\end{equation}
which, for fixed $\varphi_{\tilde{\boldsymbol a},\tilde{\boldsymbol x}},\alpha_{\tilde{\boldsymbol a},\tilde{\boldsymbol x}}$, must hold for a continuous set of values $\theta$. This is only possible if both parentheses are zero, which in turn implies $(\varphi_{\boldsymbol0,\boldsymbol1},\alpha_{\boldsymbol0,\boldsymbol1})=(\varphi_{\boldsymbol1,\boldsymbol0},\alpha_{\boldsymbol1,\boldsymbol0})=(\varphi_{\boldsymbol1,\boldsymbol1},\alpha_{\boldsymbol1,\boldsymbol1})$, or $\ket{\psi(\boldsymbol0,\boldsymbol1)}=\ket{\psi(\boldsymbol1,\boldsymbol0)}=\ket{\psi(\boldsymbol1,\boldsymbol1)}$, contradicting the established relation $\ket{\psi(\boldsymbol0,\boldsymbol1)}\neq\ket{\psi(\boldsymbol1,\boldsymbol1)}$. This concludes the demonstration for non-single-state assemblages.

Finally, let us show that a single-state assemblage is unable to do the task. From \eqref{eq:1WLOCCapplied},
\begin{widetext}
\begin{equation}\begin{split}
\sum_{\boldsymbol a, \boldsymbol x} P_{\boldsymbol X|X_f,\Omega}^\theta(\boldsymbol x|x_f,\omega)&P_{A_f|\boldsymbol A,\boldsymbol X,\Omega,X_f}^\theta(a_f|\boldsymbol a,\boldsymbol x,\omega,x_f) P_{\boldsymbol A|\boldsymbol X}(\boldsymbol a|\boldsymbol x) \times \\
& \times K_\omega^\theta\ket{\psi_{\text{single}}}\bra{\psi_{\text{single}}}K_\omega^{\theta\dagger} \sim \ket{\psi^\theta(a_f,x_f)}\bra{\psi^\theta(a_f,x_f)} \ .
\end{split}\label{eq:1WLOCC_singlestate}
\end{equation}\end{widetext}
The sum on the left-hand side is not zero for at least two pairs $(a_f,x_f)$, hence $K_\omega^\theta\ket{\psi_{\text{single}}}$ must be proportional to $\ket{\psi^\theta(a_f,x_f)}$ for both these pairs. This is incompatible with Eq.\ \eqref{eq:psitheta}, since none of the $\ket{\psi^\theta(a_f,x_f)}$ are proportional to one another.
\end{proof}
}

\end{document}